\definecolor{darkblue}{rgb}{0.,0.,0.4}
\definecolor{darkred}{rgb}{0.5,0.,0.}
\newtheorem{theorem}{Theorem}[section]
\newtheorem{lem}[theorem]{Lemma}
\newtheorem{cor}[theorem]{Corollary}
\theoremstyle{definition} \newtheorem{defn}[theorem]{Definition}
\theoremstyle{definition} \newtheorem{remark}[theorem]{Remark}
\newcommand{\ket}[1]{\left| #1 \right\rangle}
\newcommand{\bra}[1]{\left\langle #1 \right|}
\newcommand{\rann}{r_\text{ann}}
\begin{document}
\title{An invariant of topologically ordered states under local unitary transformations}
\author{Jeongwan Haah}
\email{jwhaah@mit.edu}
\affiliation 
{Department of Physics, Massachusetts Institute of Technology, Cambridge, Massachusetts}
\date{11 October 2015}
\begin{abstract}
For an anyon model in two spatial dimensions described by a modular tensor category,
the topological S-matrix encodes the mutual braiding statistics,
the quantum dimensions, and the fusion rules of anyons.
It is nontrivial whether one can compute the S-matrix from a single ground state wave function.
Here, we define a class of Hamiltonians consisting of local commuting projectors 
and an associated matrix that is invariant under local unitary transformations.
We argue that the invariant is equivalent to the topological S-matrix.
The definition does not require degeneracy of the ground state.
We prove that the invariant depends on the state only,
in the sense that it can be computed by any Hamiltonian in the class
of which the state is a ground state.
As a corollary, we prove
that any local quantum circuit that connects two ground states of quantum double models (discrete gauge theories)
with non-isomorphic abelian groups, must have depth that is at least linear in the system's diameter.

As a tool for the proof,
a manifestly Hamiltonian-independent notion of locally invisible operators is introduced.
This gives a sufficient condition for a many-body state
not to be generated from a product state by any small depth quantum circuit;
this is a many-body entanglement witness.
\end{abstract}
\maketitle

\section{Introduction}

A classification problem is defined by two ingredients,
a set of objects and an equivalence relation on it.
The problem is solved if one knows a complete list of equivalence classes and 
can tell which equivalence class each object represents.
A natural approach is of course to find \emph{invariants}
assigned for each equivalence class and a method to compute them
given a representative of the equivalence class.
The classification is completed when a complete set of invariants is found
such that distinct equivalence classes have distinct sets of invariants.

For gapped Hamiltonians,
the equivalence relation is given by continuous paths in a space of gapped local Hamiltonians.
An equivalence class is a quantum phase of matter.
The quasi-adiabatic evolution~\cite{HastingsWen2005Quasi-adiabatic,Osborne2007}
gives us a new insight to this problem.
A gapped path of Hamiltonians defines a unitary transformation $U$
that is generated by a sum of (quasi-)local operators
and relates the ground state subspace on one end of the path
to that on the other end.
Because of the locality of the generator of $U$,
it is well approximated by a product of layers of non-overlapping local unitary transformations,
called a local quantum circuit,
where the number of layers or the depth of the quantum circuit
is much smaller than the system size.
Conversely, a small-depth local quantum circuit defines a piece-wise smooth
path in the gapped local Hamiltonian space,
as any local unitary operator can be written as the exponential of a local operator.
Thus, a gapped Hamiltonian path and a quantum circuit are intuitively equivalent,
though proving it may involve very nontrivial approximation analyses depending on purposes.
We will discuss only the quantum circuits in this paper for their simplicity.

The intuitive correspondence between a gapped Hamiltonian path and a small-depth quantum circuit
suggests that all information about the quantum phase of matter
is contained in the ground state wave function,
since the equivalence relation by quantum circuits can be tested
for given wave functions.
Indeed, for two-dimensional topologically ordered systems
it has been proposed that 
the total quantum dimension~\cite{KitaevPreskill2006Topological,LevinWen2006Detecting}
and the topological S-matrix~\cite{ZhangGroverTurnerEtAl2011Quasiparticle}
can be computed from the ground state wave functions.
The total quantum dimension $\mathcal D$ is shown to be equal to the exponential of the universal constant correction,
called the topological entanglement entropy,
to the area law of entanglement entropy.
Namely, if $A$ is a contractible disk of circumference $L$, 
which is much larger than the correlation length,
then the entanglement entropy of the ground state for $A$ obeys
\[
 S_A = \alpha L - \log \mathcal D + \cdots ,
\]
where $\alpha$ is a model-dependent constant.
Thus it can be computed from a single wave function.
The reason that the entanglement entropy gets negative correction may be
intuitively understood in the string condensate picture~\cite{LevinWen2005String-net}
as the requirement that strings be closed should reduce the entropy.
It has been argued that the total quantum dimension is the only information
one can extract from entanglement entropies of a single wave function~\cite{FlammiaHammaHughesWen2009TEE}.
Even so, it was shown that from the full set of ground states on a torus 
one can extract the topological S-matrix~\cite{ZhangGroverTurnerEtAl2011Quasiparticle},
which is a finer quantity of the phase of matter than the total quantum dimension.
Based on these ideas, numerical computation was 
performed~\cite{DepenbrockMcCullochSchollwoeck2012Kagome, JiangWangBalents2012},
measuring the topological entanglement entropy,
to give an evidence that a highly frustrated magnetic system exhibits topological order.

The argument for the robustness or the invariance of these quantities, however, is indirect.
The argument of Kitaev and Preskill~\cite{KitaevPreskill2006Topological}
that there exists a universal constant correction
is invalid unless one assumes some homogeneity
because of the Bravyi's counterexample~\footnote{S. Bravyi, private communication.}.
Bravyi pointed out that the so-called one-dimensional cluster state, if cleverly embedded
into the plane, gives a nonzero contribution to the topological entanglement entropy,
although the cluster state can be created from a product state by a quantum circuit of depth two.
Kim~\cite{Kim2012Perturbative}
obtains a bound on the first order perturbation of the topological entanglement entropy,
assuming certain homogeneity and an extra criterion on conditional mutual information for various regions.
The topological S-matrix of Zhang {\it et al.}~\cite{ZhangGroverTurnerEtAl2011Quasiparticle}
seems robust, but it relies on an assumption that the system is described by a modular tensor category.
On the mathematics side,
there is Ocneanu's rigidity theorem proving that any modular tensor category cannot be deformed
except for trivial basis changes~\cite{EtingofNikshychOstrik2005}
(See also Appendix~E.6 of \cite{Kitaev2006Anyons}),
but it is nontrivial when and how a two-dimensional Hamiltonian
corresponds to a modular tensor category.

Our understanding of invariants is reflected in the status of the following problem.
If two wave functions are ground states of distinct phases of matter,
how deep a local quantum circuit must be
in order to transform one wave function to another?
This question has been addressed and partially solved by 
Bravyi, Hastings, and Verstraete~\cite{BravyiHastingsVerstraete2006generation};
see also \cite{EisertOsborne2006}.
They showed that if there are two or more degenerate
ground states that are locally indistinguishable,
then this local indistinguishability is an invariant.
However, the local indistinguishability can only tell trivial product states from topological states.
Also, this is applicable, tautologically, when there are two or more states.
A topological model can be defined on a sphere,
in which case the local indistinguishability is of no use
since there is a unique ground state.
K\"onig and Pastawski~\cite{KoenigPastawski2014Generating}
have realized this limitation.

\begin{figure*}[tb]
\subfigure[]{\includegraphics[width=.3\textwidth]{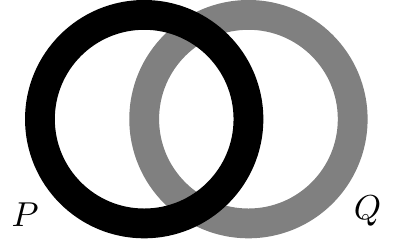}}
\qquad
\subfigure[]{\includegraphics[width=.3\textwidth]{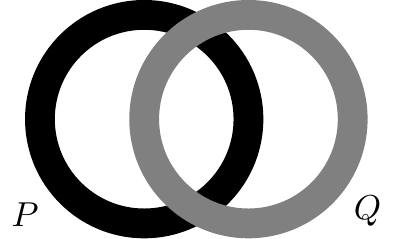}}
\qquad
\subfigure[]{\includegraphics[width=.3\textwidth]{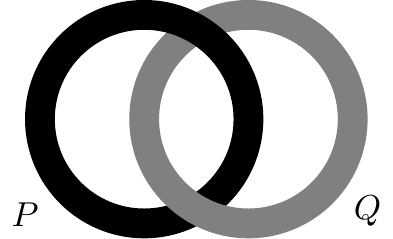}}
\caption{Two annuli of the same radius on a large plane 
are intersecting at two diamond-like regions.
The distance between the two diamonds are comparable with the radius of the annuli.
$P$ and $Q$ denote operators supported on the left and right annulus, respectively.
In (a), the usual product $P \cdot Q$ is depicted. The operator on the left is drawn closer to the reader.
In (b), the product $Q \cdot P$ is shown.
In (c), the twist product $P \infty Q$ is depicted. The order of the multiplication 
is reversed for operator components in the bottom region. See Eq.~\eqref{eq:def-twist}.
$\tilde S_{ab}$ is defined by $\bra \psi \pi_a \infty \pi_b \ket \psi$ where $\pi_a$ and $\pi_b$ are fundamental projector of the logical algebra
on the left and right annulus, respectively.}
\label{fig:annuli}
\end{figure*}

Here, we show that for topologically ordered systems
one can define a matrix $\tilde S$, and that it is an invariant of the state under local unitary transformations.
The systems we consider are restricted to those with unfrustrated commuting projector Hamiltonians
satisfying certain natural conditions which are elaborated later.
We argue that this matrix must match the topological S-matrix if the system is described by a modular tensor category.
The invariance proof directly implies that any transformation quantum circuit between states with distinct $\tilde S$ matrices
has depth that is at least linear in the system's diameter.
Although our argument is certainly motivated by the modular tensor category description,
our treatment is independent from it.
Rather, our result should be read as a justification of the modular category description
for certain lattice systems.
In addition, our treatment solely requires a large disk-like region,
and does not depend on the boundary conditions.
This might shed some light 
on how to compute a finer invariant, the S matrix, given a single wave function.
This is potentially important in numerics for identifying a quantum phase of matter in 2D
because the density matrix renormalization group method is systematically biased
to select a particular ground state even if there is degeneracy~\cite{JiangWangBalents2012},
which might prohibit obtaining the full ground state subspace.

We assume that the system is governed by an unfrustrated 
local commuting projector Hamiltonian with two extra conditions.
The first one is the local topological order condition,
which is natural for any anyon models~\cite{BravyiHastingsMichalakis2010stability}.
This condition asserts that the local reduced density matrix on a region $M$ of
a ground state wave function is inferred from the terms of the Hamiltonian around $M$.
The second one is a new condition, which we call \emph{stable logical algebra condition},
asserting that the algebra of all string operators
on an annulus is independent of the thickness of the annulus.
A motivation for this condition is that
the interaction among anyons is determined by
the topology of the paths they have moved along.
It seems that the second condition closely related to the finiteness
of the number of topological particle types in two dimensions;
we show that 
this condition fails to be true when there are infinite number of particle types.

The $\tilde S$ is defined by particle type projection operators $\pi_a$
supported on two annuli of comparable size as shown in Figure~\ref{fig:annuli}.
Here, $\pi_a$ projects onto a state where
a definite particle is present in the inner disk of the annulus.
$(a,b)$-entry of $\tilde S$ is the expectation value of 
the {\em linking} of $\pi_a$ on the left annulus and $\pi_b$ on the right annulus.
This is almost identical to the usual definition of the topological S-matrix.
The difference is that we use the particle type projection operators
instead of string operators that carry particles of definite types.
Our approach is advantageous because we can avoid any difficulty
in defining the string operators associated with definite particle types.

Our $\tilde S = \tilde S(\ket \psi)$ matrix is independent of Hamiltonians
in the sense that whenever $\ket \psi$ is a common ground state of $H_1$
and $H_2$, where each of $H_1$ and $H_2$ satisfies our two conditions,
the result $\tilde S$ is the same.
Central to the proof of this is a notion of {\em locally invisible} operators.
They are invisible since there is no way to tell 
whether the operator has acted on, by looking at a small region.
This is an abstraction of the string operators that transport anyons.
It is important that the locally invisible operators are defined
without any reference to a Hamiltonian.

It turns out that the existence of the locally invisible operators
is a signature of topological order or ``long-range'' entanglement.
That is, the locally invisible operators are {\em many-body entanglement witnesses.}
The locally invisible operators need not wrap around a topologically nontrivial loop;
they can distinguish a topologically ordered state from a product state
even though they are supported on a large but finite disk.
Section~\ref{sec:locally-invisible} where we study locally invisible operators
can be read independent of the rest of this paper.

The main text is organized as follows.
In Section~\ref{sec:S-matrix} we define the particle type projector operators $\pi_a$
and the matrix $\tilde S$ for a Hamiltonian $H$,
and argue that having $\tilde S$ is equivalent to having the topological S-matrix.
We will compute $\tilde S$ for the simplest model, the toric code~\cite{Kitaev2003Fault-tolerant}.
Section~\ref{sec:locally-invisible} is devoted to the definition and properties of locally invisible operators.
An unanswered question of Ref.~\cite{KoenigPastawski2014Generating} will be answered,
on how deep a quantum circuit must be in order to create a toric code state on a sphere.
The last section~\ref{sec:invariant} contains our main theorem.
We remark that for abelian discrete gauge theories $\tilde S$
is a complete invariant.
The section is divided into several subsections to clarify and explain our technical conditions.

\subsection*{Convention and terminology}

We consider lattices where finite dimensional degrees of freedom (qudits)
are placed at each site.
We say an operator $O$ is {\bf supported} on a region $D$
if $O$ acts by identity on the complement of $D$.
The {\bf support} of $O$ is the minimal region $D$ on which $O$ is supported.
The {\bf subspace support} of a hermitian operator $\rho$ is
the linear space spanned by the eigenvectors of $\rho$ with nonzero eigenvalues.
(The projection onto the subspace support of the density operator $\rho$
is also known as {\it support projection} in operator algebra.)
An {\bf annulus} is the bounded region between two concentric circles on a plane.
We say the annulus is of radius $\rann$ and thickness $t>0$ if
the radius of the inner circle is $\rann - t$, and the outer $\rann +t$.
A {\bf (local) quantum circuit} is an ordered product of layers of local unitary operators of non-overlapping support.
Every local unitary operator in a layer acts on at most two qudits 
that are nearest neighbors of each other.
The {\bf depth} of a quantum circuit is the number of layers in it.
The {\bf range} of the quantum circuit $W$ is the minimum distance $R$
such that for every operator $O$ supported on a disk of radius $r$
the conjugated operator $WOW^\dagger$ is supported on a disk of radius $r+R$ for any $r$.
Since local unitary operator in a quantum circuit acts on a pair of qudits separated by distance $1$,
the range of the quantum circuit is always less than or equal to its depth.
The {\bf interaction range} of a local Hamiltonian 
is the distance $w$ such that any term in the Hamiltonian is supported on a disk of diameter $w$.

\section{Topological S-matrix}
\label{sec:S-matrix}

The topological S-matrix $S$ contains information on the mutual braiding interaction between far separated excitations.
The rows and columns of $S$ are indexed by types of excitations
and $(a,b)$ entry is the quantum mechanical amplitude of the process,
where a particle-anti-particle pair of type $a$ and another pair of type $b$
are created and separated, the particle $a$ is moved around the particle $b$ (braiding),
and then the particle-anti-particle pairs are annihilated.
If the braiding is sufficiently nontrivial, then $S$ 
with a proper normalization is a unitary matrix.~\cite{BakalovKirillov2001}

This explanation contains essential physical intuition of the S-matrix,
but this, as it is stated, cannot be a definition
since we do not specify how particle types are defined.
The particle types must be a quantum number (conserved quantity)
that is invariant under certain local operators.
The class of allowed local operators that preserve the quantum number
is usually determined by the symmetry of the system.
For example, the total angular momentum is invariant under
the action of rotation symmetry on the physical state of a particle,
and hence is one of defining quantum numbers of the particle type, the spin.
However, in topological systems, which is of interest in our discussion,
there is no symmetry restriction.
A conservation law in this unrestricted situation is termed as a superselection rule~\cite{WickWightmanWigner1952},
and the particle types are identified with the superselection sector that the particle represents~\cite{Preskill2004TQCLecture}.

\subsection{Particle types}
\label{sec:particle-types}

To be concrete, let us restrict our discussion to two-dimensional lattices 
with finite dimensional degrees of freedom placed at each site,
and Hilbert space is given by the tensor product of the local degrees of freedom.
Consider a local Hamiltonian
\[
 H = - \sum_j h_j
\]
and localized excitation $e$ that is separated from the rest excitations by a large distance.
We assume that every term of $H$ is a projector commuting with any other term $[h_j,h_{j'}] = 0$,
and on the ground state $\ket \psi$ every term of $H$ is minimized $h_j \ket \psi = \ket \psi$,
i.e., $H$ is not frustrated.
The state $\ket{\psi'}$ in which the excitation $e$ is separated from the rest
can be described by the condition
\begin{equation}
 h_a \ket{ \psi'} = \ket{ \psi' }
 \label{eq:vacuum-condition}
\end{equation}
whenever $h_a$ is supported on an annulus $A$ circling around $e$.

What operators should we consider in order to learn about the superselection sector that $e$ represents?
Certainly, we should consider any operators acting on the inner disk $D$ of the annulus.
They form a matrix algebra
\[
 \mathrm{Mat(D)}.
\]
These operators will create and annihilate excitations on the inner disk, but will not change
the fact that excitations outside the annulus are far separated from those on the inner disk.
In addition, we should consider Aharanov-Bohm interference measurements about $e$,
by letting various particles encircle $e$ along the annulus.
The operators of this class are supported on the annulus and commute with terms of the Hamiltonian
since the overall effect of the measurement should not introduce any new excitation.
So they belong to an algebra
\[
 \mathcal A = \{ O \text{ on } A ~|~ [O, h_j] = 0 \text{ for all $h_j$} \}.
\]
Since $h_j$ are hermitian, $\mathcal A$ is in fact a $C^*$-algebra.

The elements of $\mathcal A$ do not act faithfully on the state $\ket{\psi'}$,
because it is defined regardlessly of Eq.~\eqref{eq:vacuum-condition}.
For instance, $N = (1-h_a)$ with $h_a$ supported on the annulus
is an element of $\mathcal A$, but acts on $\ket{\psi'}$ as zero;
any two operators of $\mathcal A$ that differ by a \emph{null} operator $N$
should have the same result on $\ket{\psi'}$.
In view of the interference thought-experiment, the null operator corresponds
to flux measurement along a small loop that is entirely contained in the annulus,
followed by the postselection onto a nonzero flux value.
Since the vacuum condition Eq.~\eqref{eq:vacuum-condition} ensures
that the flux is zero along such small loops, this outcome has necessarily zero probability.
Thus, it is appropriate to factor out $\mathcal A$ by
\begin{equation}
 \mathcal N = \left \{ O \in \mathcal A ~\middle|~O \left(\prod_{\mathrm{supp}(h_a) \subseteq A_{+w} } h_a \right)= 0 \right\}  ,
 \label{eq:def-nullOP}
\end{equation}
the set of all null operators on $\ket{\psi'}$,
where $\mathrm{supp}$ means the support and $A_{+w}$ is the annulus enlarged from $A$ by the interaction range $w$.

Now, the particle types are conserved quantities under the operations of
\[
 \mathcal C =  \mathcal A / \mathcal N \otimes \mathrm{Mat}(D) .
\]
This set of operators forms a $C^*$-algebra, and for a thick but finite annulus,
it is finite dimensional over complex numbers.
Recall that any finite dimensional $C^*$-algebra $\mathcal C$ is isomorphic to
a direct sum of full matrix algebras (simple algebras), and there exist projectors $\pi_a$ in the center of $\mathcal C$
onto the simple algebra components.
The projectors, which we call {\bf fundamental projectors},
are uniquely determined by the algebra $\mathcal C$ from the conditions
\begin{align}
 \mathcal C &= \bigoplus_a \pi_a \mathcal C \text{ where $\pi_a \mathcal C$ is simple,} \label{eq:fundamentalprojectors}\\
 \pi_a &= (\pi_a)^\dagger = (\pi_a)^2 \text{ is in the center of $\mathcal C$.} \nonumber 
\end{align}
We conclude that the fundamental projectors in the center of $\mathcal C$ is precisely
the conserved operators.
Therefore, we \emph{identify} the particle types of excitations, or anyon labels,
that can be supported in the disk $D$ with the fundamental projectors of $\mathcal C$.
An immediate observation is that any central element of $\mathcal C$
should have the identity component on $\mathrm{Mat}(D)$,
since any full matrix algebra has trivial center consisting of scalars.
This means that the fundamental projectors of $\mathcal C$ is actually the fundamental projectors of 
the smaller algebra $\mathcal A / \mathcal N$, which we call the {\bf logical algebra}, on the annulus.

We do \emph{not} claim that we have derived the correspondence rigorously between the particle types
and the fundamental projectors of the algebra $\mathcal C$.
Rather, mathematically, our argument should be regarded as a motivation to \emph{define} particle types.

Remark that if there are completely disentangled qudits in the ground state,
the logical algebra is invariant upon the removal of those qudits.
Specifically, suppose $h_{a_0} = \ket 0 \bra 0$ was a single-qudit operator of rank 1 acting on the qudit $a_0$.
Any operator $O \in \mathcal A$ must have a form $ O =  h_{a_0} \otimes P_{a_0^c} +  (I-h_{a_0}) \otimes P'_{a_0^c} $
for some $c , c' \in \mathbb{C}$. But, the second term is annihilated by $h_{a_0}$.
If follows that $ O \equiv I_{a_0} \otimes P_{a_0^c} \in \mathcal A / \mathcal N $.
Hence, the logical algebra is invariant for adding or removing disentangled degrees of freedom $\ket 0$
together with Hamiltonian terms $\ket 0 \bra 0$.

\subsection{Example}
\label{sec:SmatrixExample}

Let us apply our general definition of the particle types to a simple model,
the $\mathbb{Z}_2$ gauge theory or the toric code model~\cite{Kitaev2003Fault-tolerant}.
This will help developing some intuition about $\mathcal A / \mathcal N$.
Our definition will reproduce what is known.
The model is defined on a two-dimensional square lattice
with two-dimensional degrees of freedom at each edge.
The Hamiltonian is
\[
H = - \sum_{\square} \frac{1}{2} \left(I+\prod_{i \in \square} \sigma^z_i \right) 
- \sum_{\mathlarger{\mathlarger +}} \frac{1}{2} \left( I + \prod_{ i \in {\mathlarger{ \mathlarger +}}} \sigma^x_i \right),
\]
where $\square$ denotes four edges associated to a plaquette and ${\mathlarger{ \mathlarger +}}$ denotes four edges associated to a vertex.
The identity operators are inserted to make $H$ conform with our assumption that the Hamiltonian
is a sum of projectors.
It is simple to verify that 
the product of $\sigma^z$ along any closed loop of edges on the real lattice
commutes with every vertex terms,
and the product of $\sigma^x$ along any closed loop of dual edges on the dual lattice
commutes with every plaquette terms.
Here, the closed loop does not have to consist of a single loop;
any product of the loop operators commute with every term of $H$.
Furthermore, any operator that commutes with every plaquette and vertex term
is a linear combination of the loop and dual loop operators.
Therefore, our algebra $\mathcal A$ is precisely generated 
by the loop and dual loop operators supported on a given annulus.

The set $\mathcal N$ of all null operators on the annulus has to be examined carefully.
Let $O$ be any null operator.
By definition, $PO=0$ where $P$ is the product of all $h_a$ whose support overlaps with the annulus.
Assume without loss of generality that $h_{a_1} \cdots h_{a_m} O = 0$ where $m$ is minimum possible.
We claim that each $h_{a_k}$ not only acts nontrivially on the annulus,
but is in fact supported on the annulus.
To see this, suppose this is not the case.
Then, one can choose a qubit $a$ outside the annulus
such that it is acted on by $h_{a_j}$ but not by $h_{a_k}$ for $k \neq j$.
Since $s_j=2h_{a_j}-I$ is a tensor product of Pauli matrices,
there is a Pauli matrix $\sigma$ acting on $a$ such that $\sigma s_j = - s_j \sigma$.
Using $\sigma h_{a_j} \sigma + h_{a_j} = I$,
we have
\[
0=\sum_{n=0}^1 \sigma^n \left( O \prod_{k=1}^m h_{a_k} \right) (\sigma^\dagger)^n 
 = O \prod_{k \neq j} h_{a_k},
\]
which is a contradiction to the minimality of $m$.
We conclude that
\[
 \mathcal N = \left\{ \sum_k O_k (1-h_k) ~\middle|~ O_k, h_k \in \mathcal A \right\}
\]
is a two-sided ideal of $\mathcal A$ generated by $(1-h_k) = (1- s_k)/2$ 
where the terms $h_k$ are supported on the annulus.

Next, we explain that, by factoring out by $\mathcal N$,
the operators of $\mathcal A$ can be ``deformed'' within the annulus.
If $O \in \mathcal A$ is a loop operator consisting of $\sigma^z$,
then $O s_a$ for some $\sigma^z$-type $s_a$ on the annulus that overlaps with $O$, is another loop
which is deformed from $O$. In the quotient algebra, the two operators are the same
\[
 O s_a \equiv O \in \mathcal A / \mathcal N
\]
because $O s_a - O = O(s_a -1) \in \mathcal N$.
Generalizing to multiple multiplications of elementary loops $s_a$,
we see that any two real or dual loop operators with the same winding number modulo 2
around the annulus are the same members of the quotient algebra.
Therefore, we have computed the logical algebra completely,
\[
 \mathcal A/ \mathcal N = \left\{ c_0 + c_1 \bar X + c_2 \bar Z + c_3 \bar X \bar Z  ~|~ c_i \in \mathbb{C} \right\},
\]
where $\bar X$ and $\bar Z$ are dual and real loop operators, respectively, that wrap around the annulus once.
The fundamental projectors of $\mathcal A/ \mathcal N$ are
\begin{align*}
 \pi_{1} & = \frac{1}{4} ( 1 + \bar X )( 1 + \bar Z ) \\
 \pi_{e} & = \frac{1}{4} ( 1 - \bar X )( 1 + \bar Z ) \\
 \pi_{m} & = \frac{1}{4} ( 1 + \bar X )( 1 - \bar Z ) \\
 \pi_{\epsilon} & = \frac{1}{4} ( 1 - \bar X )( 1 - \bar Z ) 
\end{align*}
We have named the projectors according to the conventional symbols for anyons of the model.
This verifies our identification of particle types with the fundamental projectors of $\mathcal A / \mathcal N$.
A similar calculation proves the same conclusion
for $\mathbb{Z}_N$ toric code, the quantum double model with the (gauge) group $\mathbb{Z}_N$.
The fundamental projectors are labeled by tuples $(a,b) \in \mathbb{Z}_N \times \mathbb{Z}_N$.

For general quantum double models, a similar calculation has been done by Bombin and Martin-Delgado%
~\cite[App. B]{BombinMartin-Delgado2008Family},
who show that the algebra $\mathcal A$ on the thinnest possible annulus, known as a ribbon,
is spanned by projectors
onto definite particle type states.
This is consistent with our identification because $\mathcal N$ for the ribbon has to be zero according to their calculation.
However, the calculation is insufficient to verify our identification
since it does not show the structure of the operator algebra $\mathcal A$ or $\mathcal N$ for 
an annulus of non-minimal thickness.
Levin and Wen~\cite{LevinWen2005String-net} have computed (string) operators
that commute with every term of Hamiltonians possibly except at the ends
for the string-net models. They have noted that particle types are identified with
certain ``irreducible'' solutions of commutativity equations.
This irreducibility depends on the structure of their particular equations.
It would be a nontrivial task to adapt this irreducibility to our more general setting.

\subsection{Twist product and topological S-matrix}
\label{sec:tildeSmatrix}

Consider a pair of large circular annuli $A_L$ and $A_R$ as shown in Figure~\ref{fig:annuli}.
We require that the annuli are of comparable size with each other
and their center is separated by a distance comparable to the radius.
The intersection of the annuli then consists of two diamond-like regions $C_u$ and $C_d$
that are separated by a distance comparable to the radius.
The configuration is to ensure that any local operator can only intersect at most one of $C_u$ or $C_d$.
For each of the annuli, there is an algebra $\mathcal C^{(i)} = \mathcal A^{(i)} / \mathcal N^{(i)}$, where $i = L, R$.
We will see how to extract entries of topological $S$-matrix 
from $\mathcal C^{(L)}, \mathcal C^{(R)}$ by taking a special product.

\begin{defn}
For a pair of bipartite operators
\[
 P_{MM'} = \sum_{j} P_M^j \otimes P_{M'}^j, \quad
 Q_{MM'} = \sum_{k} Q_M^k \otimes Q_{M'}^k,
\]
the {\bf twist product} is defined by
\begin{equation}
 P_{MM'} \infty Q_{MM'} := \sum_{jk} P_M^j Q_M^k \otimes Q_{M'}^k P_{M'}^j .
 \label{eq:def-twist}
\end{equation}
\end{defn}
\noindent
That is, we reverse the order of the product in the subsystem $M'$.
Note that the twist product is independent of any linear decomposition of the operators
into tensor product operators. This is because Eq.~\eqref{eq:def-twist} is bilinear in its arguments:
\begin{align*}
 (aP_1 + b P_2) \infty Q &= a(P_1 \infty Q) + b(P_2 \infty Q), \\
 P \infty (aQ_1 + bQ_2) &= a(P \infty Q_1) + b(P \infty Q_2)
\end{align*}
for any $a,b \in \mathbb{C}$.
Unless specified otherwise, 
the operator $P$ that appears on the left of the symbol $\infty$ will always be supported on the left annulus,
and $Q$ on the right of $\infty$ will always be on the right annulus.

Let $M$ be a region that includes $C_u$ but not $C_d$, and let $M'$ be the complement of $M$.
If we take, with respect to this bipartition of the plane, 
the twist product of $P$ on the left annulus and $Q$ on the right annulus,
the resulting operator $ P \infty Q $ 
have indeed a twisted configuration compared to the usual product $PQ$.
See Figure~\ref{fig:annuli}.
Observe that the precise choice of $M$ is immaterial to $P \infty Q$.
$M$ can be the region above an arbitrary horizontal line placed between $C_u$ and $C_d$.
\begin{defn}
Given a state $\ket \psi$ and the pair of annuli,
\[
 (P,Q) \mapsto \bra \psi P \infty Q \ket \psi
\]
is a bilinear map from operators to complex numbers.
We will call it the {\bf twist pairing} of $P$ and $Q$.
\label{defn:twist-pairing}
\end{defn}

We particularly consider a matrix formed by the twist pairings of 
particle type projectors $\pi_a$ defined previously in Section~\ref{sec:particle-types}.
\begin{equation}
 \tilde S_{ab} := \bra \psi \pi_a^{(L)} \infty \pi_b^{(R)} \ket \psi . 
 \label{eq:tildeS}
\end{equation}
Here, we have abused notation.
$\pi_a \in \mathcal A / \mathcal N$ is an equivalence class of operators, 
and, strictly speaking, is not an operator.
The pairing should be read as the twist pairing of any representatives of the classes $\pi_a$ and $\pi_b$.
Although $\tilde S$ is defined by arbitrary representatives,
the resulting value is well-defined.
To see this, it is enough to show that
\begin{equation}
 \bra \psi N \infty O \ket \psi = 0
\label{eq:Hnull-null}
\end{equation}
for any operator $N \in \mathcal N_L$ on the left annulus, and $O \in \mathcal A_R$ on the right annulus.
Since both $N$ and $O$ are commuting with any term $h_j$ of the Hamiltonian,
we have
\[
 \bra \psi N \infty O \ket \psi = \bra \psi (N \infty O) h_j \ket \psi = \bra \psi (N h_j) \infty O \ket \psi .
\]
Applying this repeatedly, we can bring any number of $h_j$'s to $N$.
By definition, $N$ is an operator that is annihilated by some product of $h_j$'s.
Therefore, Eq.~\eqref{eq:Hnull-null} holds.

\subsection{Connection to modular tensor category}
\label{sec:connection}
Let us illustrate how $\tilde S$ is related to the topological S-matrix.
Our discussion will be heuristic and rather brief,
and we will not attempt to establish the relation from first principles,
as we do not claim that 
we have proved the correspondence of the particle types with the fundamental projectors.
Suppose that our system has excitations that are described by a modular tensor category,
and that the operators on the annulus can be unambiguously decomposed as
a linear combination of closed string operators carrying particles of definite type.
In the modular tensor category, a linking diagram represents morphisms
where vertical segments are labeled by objects.
We identify the lines of a linking diagram as worldlines of physical anyons.
For any particle type $a$,
the closed string operator for the particle type $a$ is normalized as
\[
 \bra \psi \bigcirc_a \ket \psi = d_a \ge 1 ,
\]
where $d_a$, the quantum dimension of $a$,
is determined by the fusion rules of the modular tensor category.
Then, the topological S-matrix is given by
\begin{equation}
 S_{ab} = \frac{1}{\mathcal D} \bra \psi \bigcirc_a \infty \bigcirc_b \ket \psi ^*,
 \label{eq:Smatrix}
\end{equation}
where the complex conjugation is to follow the traditional convention;
see \cite[Chap. 3]{BakalovKirillov2001},
\cite[App. E]{Kitaev2006Anyons}, and Figure~\ref{fig:annuli}.
The number $\mathcal D = \sqrt{ \sum_a d_a^2 }$ is the total quantum dimension.
For the distinguished label $a=$``1'', the vacuum, we have $\bigcirc_1 = I$.
It follows that
\[
S_{1a} = d_a / \mathcal D
\]
by the normalization.

In writing Eq.~\eqref{eq:Smatrix}, 
we have exploited an expected property of any topological theory
that the expectation value of any linked operators 
depends only on the topology of their configuration in spacetime.
As noted in the beginning of this section, an entry of the topological S-matrix is
an amplitude of a braiding process.
Hence, the operators inserted between $\bra \psi$ and $\ket \psi$
should have been a linking diagram that is extended in the time direction;
taking the vertical direction of this piece of paper as the time $t$
and the horizontal direction as one of the space directions, say $x$,
the diagram should look like Figure~\ref{fig:annuli}(c).
Relying on the expected property of topological theory,
we rotate the operators in spacetime so that the vertical direction
of this piece of paper is now $y$ direction of the space.
For these rotated operators our twisted product is well-suited.

The particle type projection operators $\pi_a$ as a linear combination of the string operators $\bigcirc_a$
can be obtained using Lemma~3.1.4 of Ref.~\cite{BakalovKirillov2001} or Eq.~(225) of Ref.~\cite{Kitaev2006Anyons}
\begin{equation}
\raisebox{-5ex}{\includegraphics[height=10ex]{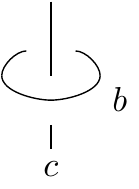}} = \frac{S_{bc}^*}{S_{1c}} 
\raisebox{-5ex}{\includegraphics[height=10ex]{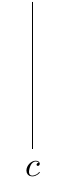}}
\end{equation}
and the unitarity of the topological $S$-matrix.
If $\pi_a = \sum_b \xi_{ab} \bigcirc_b$, then
\[
\delta_{ac} \raisebox{-5ex}{\includegraphics[height=10ex]{line-c.pdf}}
=
\raisebox{-5ex}{\includegraphics[height=10ex]{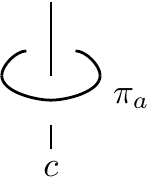}}
=
\sum_b \xi_{ab} \raisebox{-5ex}{\includegraphics[height=10ex]{link-cb.pdf}}
=
\sum_b \xi_{ab} \frac{S_{bc}^*}{S_{1c}} 
\raisebox{-5ex}{\includegraphics[height=10ex]{line-c.pdf}}
\]
Hence,
\begin{equation}
 \pi_a = \frac{d_a}{\mathcal D} \sum_b S_{ab} \bigcirc_b .
\end{equation}
Therefore, using the linearity of twist pairing, we have
\begin{equation}
 \tilde S_{ab} = \frac{d_a d_b}{\mathcal D}S_{ab}.
 \label{eq:tildeSisDSD}
\end{equation}

We can recover $S$ from $\tilde S$ since the trivial particle or vacuum is distinguished.
There exists a unique label $a=$``1'' such that $\pi_1 \ket \psi = \ket \psi$, 
and all other $\pi_a$ with $a \neq 1$ annihilates $\ket \psi$.
The uniqueness of ``1'' is not immediate from what we have assumed for constructing $\pi_a$,
but will follow from the local topological order condition that we will explain in 
Remark~\ref{rem:vacuum-distinguished} in Section~\ref{sec:ass-ltqo}.
On the other hand, the uniqueness is a part of assumptions of the modular tensor category.

The quantum dimensions of each particle type can be read as
\[
 \tilde S_{1a} = \frac{d_a^2}{\mathcal D^2} = \tilde S_{a1} .
\]
The total quantum dimension can be read off from $\tilde S_{11} = 1/ \mathcal D^2$.
Therefore, $\mathcal D$ and $d_a$ are determined, and $S$ is reconstructed from $\tilde S$.

Remark that except for the unique label ``1'' the rows and columns of 
the matrix $\tilde S$ have no preferred ordering.
In addition, without appealing to some homogeneity of the state,
there is no guarantee that the number of rows and that of columns are the same.
Indeed, if an annulus straddles two regions, one of a topological state and the other of a trivial state,
separated by a gapped boundary,
into which some of the particles can be absorbed~\cite{BravyiKitaev1998,KitaevKong2011Boundary},
the logical algebra can have smaller dimension as a vector space
than it did when there was no boundary.

We conclude this section
by proving that the value of the twist pairing is invariant under small-depth quantum circuit
in the following sense.
\begin{lem}
For any pair of operators $P$ on the left annulus and $Q$ on the right,
and for any local quantum circuit $W$ of range $R$ where 
\begin{equation}
R < \mathrm{dist}(C_u,C_d)/10
\label{eq:range-restriction}
\end{equation}
it holds that
\begin{equation}
 W( P \infty Q )W^\dagger = (W P W^\dagger) \infty ( W Q W^\dagger).
 \label{eq:inv-twist}
\end{equation}
\label{lem:inv-twist-prod}
\end{lem}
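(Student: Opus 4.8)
The plan is to compress the twist product into a single algebraic formula, conjugate it by $W$ term by term, and then use the range hypothesis to re-assemble the result as a twist product again. The formula I would record first is
\[
 X \infty Y \;=\; \sum_i X_M^i \, Y \, X_{M'}^i ,
\]
valid for \emph{any} operators $X,Y$, any bipartition of the plane into $M$ and $M'=M^c$, and any decomposition $X=\sum_i X_M^i\otimes X_{M'}^i$ into $M$- and $M'$-supported factors. It drops out of Eq.~\eqref{eq:def-twist} by the regrouping $X_M^i Y_M^j \otimes Y_{M'}^j X_{M'}^i = X_M^i\,(Y_M^j\otimes Y_{M'}^j)\,X_{M'}^i$ followed by summing the $Y^j$ back up; since $\infty$ is independent of the decomposition used, so is the right-hand side. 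Taking $M$ to be the half-plane above a fixed horizontal line $\ell$ between $C_u$ and $C_d$, and $P$ on $A_L$, $Q$ on $A_R$, this gives $P\infty Q=\sum_\alpha P_M^\alpha\,Q\,P_{M'}^\alpha$, and therefore, since conjugation by $W$ is an algebra automorphism,
\[
 W(P\infty Q)W^\dagger=\sum_\alpha (WP_M^\alpha W^\dagger)(WQW^\dagger)(WP_{M'}^\alpha W^\dagger).
\]
It remains to identify this sum with $(WPW^\dagger)\infty(WQW^\dagger)$, computed with the same $\ell$.

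The difficulty is that $WP_M^\alpha W^\dagger$ is supported on the $R$-neighborhood of $A_L\cap M$, which pokes across $\ell$ near the (two) arcs where $A_L$ meets $\ell$, and similarly $WP_{M'}^\alpha W^\dagger$ pokes back; so neither is literally supported on one side of the cut. The geometric input, which I would isolate as a lemma, is that when $R<\mathrm{dist}(C_u,C_d)/10$ one can choose $\ell$ so that all this protrusion lives inside a region $G$ — a neighborhood of $A_L\cap\ell$ — that is disjoint from $(A_R)_{+R}\supseteq\mathrm{supp}(WQW^\dagger)$. This holds because $\ell$ can be kept a distance comparable to $\rann$ from $C_u\cup C_d$, so $A_L\cap\ell_{+R}$ avoids the overlap and hence avoids $A_R$ entirely, and away from the overlap the two annuli are separated by a distance comparable to $\rann$, which is $\gg R$ by the comparability of all length scales; the factor $10$ is a comfortable margin. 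The one subtlety in the check is that, $\partial A_L$ being curved, the $R$-neighborhood of a nearly tangent chord of $A_L$ can be wider than $R$ (by $O(\sqrt{R\,\rann})$), but this is still $\ll\rann$.

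Granting the lemma, the rest is bookkeeping with commuting operators. Write $M|M'$ decompositions $WP_M^\alpha W^\dagger=\sum_s A_\alpha^{s,M}A_\alpha^{s,M'}$ and $WP_{M'}^\alpha W^\dagger=\sum_u B_\alpha^{u,M}B_\alpha^{u,M'}$, arranged so that $A_\alpha^{s,M'}$ and $B_\alpha^{u,M}$ are supported in $G$ and hence commute with $WQW^\dagger$; together with the trivial fact that any $M$-supported operator commutes with any $M'$-supported one, a short chain of commutations gives
\[
 (WP_M^\alpha W^\dagger)(WQW^\dagger)(WP_{M'}^\alpha W^\dagger)=\sum_{s,u}(A_\alpha^{s,M}B_\alpha^{u,M})(WQW^\dagger)(A_\alpha^{s,M'}B_\alpha^{u,M'}),
\]
and the same commutations show that $\{(A_\alpha^{s,M}B_\alpha^{u,M}),\,(A_\alpha^{s,M'}B_\alpha^{u,M'})\}_{(\alpha,s,u)}$ is a genuine $M|M'$ decomposition of $WPW^\dagger$. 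Summing over $\alpha,s,u$ and invoking the formula of the first step (for this decomposition) identifies the total with $(WPW^\dagger)\infty(WQW^\dagger)$, which is the claim. I expect the geometric lemma to be the real obstacle — everything else is linear algebra with disjoint supports — and the range hypothesis enters exactly there, in forcing $G\cap(A_R)_{+R}=\emptyset$.
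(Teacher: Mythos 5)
Your proof is correct, but it takes a different route from the paper's. The paper reduces to a single two-qubit gate $U$ (since $W$ is a product of such gates, applied to successively fattened annuli whose intersection still consists of two separated regions): if $U$ does not straddle the cut, Eq.~\eqref{eq:inv-twist} is immediate from Eq.~\eqref{eq:def-twist}, and if it does, one simply slides the cut line to avoid $\mathrm{supp}(U)$, using the observation (made before Definition~\ref{defn:twist-pairing}) that the twist product does not depend on the precise choice of $M$. That argument is three lines long and never has to confront the ``protrusion'' of conjugated operators across the cut, because the cut is re-chosen gate by gate. You instead fix one cut, conjugate the compressed formula $P\infty Q=\sum_\alpha P_M^\alpha\,Q\,P_{M'}^\alpha$ all at once, and repair the broken $M|M'$ decomposition by commuting the overhanging pieces past $WQW^\dagger$; the price is your geometric lemma that the $R$-neighborhood of $A_L\cap\ell$ misses $(A_R)_{+R}$. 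Note that this disjointness follows from the standing assumption that the radii, the center separation, and $\mathrm{dist}(C_u,C_d)$ are all comparable (it is \emph{not} a consequence of $R<\mathrm{dist}(C_u,C_d)/10$ alone --- if the centers were nearly $2\rann$ apart the crossing arcs of $A_L$ and $A_R$ on $\ell$ could be much closer than $\mathrm{dist}(C_u,C_d)$); but the paper's proof leans on essentially the same geometric substrate, both for the cut-independence of $\infty$ and for the separation of the fattened intersection regions, so this is not a defect. What your version buys is an explicit, single-formula verification that makes the role of the range hypothesis visible in one place; what the paper's version buys is brevity and the avoidance of any bookkeeping with sub-decompositions.
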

\begin{proof}
The range of the circuit is conditioned (Eq.~\eqref{eq:range-restriction}) 
such that even if one fattens the annuli by distance $R$,
the intersection of the fattened annuli will still consist of two separated regions.

Since the local quantum circuit is a product of local unitaries,
it suffices to show Eq.~\eqref{eq:inv-twist} for a local unitary $U$ acting on at most, say, 2 nearby qubits.
From the defining formula Eq.~\eqref{eq:def-twist}
our claim is clear if $U$ is acting on either $M$ or $M'$ but not on both.
If $U$ is acting across $M$ and $M'$, we can always choose a different $M$
such that $M$ includes or excludes the support of $U$.
The choice is possible because the intersection of the two operators $P,Q$ 
is well-separated.
\end{proof}

This lemma roughly suggests that the topological S-matrix is invariant under any quantum circuit
if the circuit is not too deep.
This is because the algebra on the annulus is defined in terms of the commutativity with terms of the Hamiltonian,
and the deformation of the state and the Hamiltonian by a quantum circuit leaves all commutation relation untouched. 
Hence, the quotient algebra $\mathcal A / \mathcal N$ is anticipated
to be isomorphic by a map induced by the quantum circuit conjugation.
Then, the fundamental projectors before and after the transformation will be corresponded unambiguously,
and their twist products, and hence the matrix $\tilde S$, are the same by the lemma.

This sounds reasonable, but we find it difficult to prove it rigorously
along the line we just described.
The first problem is that
the anticipation that the quotient algebra
$\mathcal A / \mathcal N$ would be invariant may not be true in general.
In order to deal with quantum circuits,
it appears that we have to consider situations where
the support of the logical algebra is enlarged.
In Remark~\ref{remark-nonstable-logical-algebra} 
on page~\pageref{remark-nonstable-logical-algebra},
we discuss a stack of two-dimensional toric code layers in three dimensions
and consider its logical algebra supported on the (thick) wall of a sphere.
The logical algebra changes as its support is enlarged.
This happens essentially because 
there are formally infinitely many particle types,
which might be the only reason.
The second problem is in our explicit use of the Hamiltonian 
to construct the algebra $\mathcal A/ \mathcal N$.
We wish to find an invariant of the state,
and this approach alone does not give any Hamiltonian-independent quantity.

Note that our objects and statements can be tested without invoking
concepts of anyons or framework of modular tensor categories,
though the former are certainly motivated by the latter.
In the next sections, we will assume an extra condition on the quotient algebra $\mathcal A / \mathcal N$,
and prove that indeed the $\tilde S$ is invariant under small-depth quantum circuits.
Furthermore, we will prove that there exists a certain class of Hamiltonians
such that whenever they share a ground state the corresponding $\tilde S$ matrices are the same.
This supports a belief that a ground state wave function contains 
all information about the phase of matter it represents.
The independence of the $\tilde S$ on the Hamiltonian
leads us to a lower bound on the depth of any transformation quantum circuits
between states with distinct $\tilde S$ matrices.

\section{Locally invisible operators: Many-body entanglement witness}
\label{sec:locally-invisible}

In this section,
we define a class of operators, which we call {\em locally invisible} operators,
that manifestly depends only on a given state,
and study their properties.
We will show that the existence of a nontrivial pair of locally invisible operators
is a sufficient condition that the state requires a deep local quantum circuit to be generated.
In general, it is a very hard problem to give a lower bound on the complexity of the quantum circuit.
In some physically important cases,
it is sufficient to examine space-correlation functions to give such a lower bound,
since quantum circuits of small depth can only generate short-ranged correlations.
The problem is to determine what correlation functions to examine.
Our approach may be considered as a way to find those correlation functions.

The sufficient condition given by the locally invisible operators
defines a ``many-body entanglement witness.''
To the best of the author's knowledge,
this is the first rigorous witness 
that only depends on the bulk of the wave-function
and that is applicable to topologically ordered states.

To motivate the notion,
let us recall what was physically important for elements of our algebra.
The string operators are responsible 
for transporting anyons through the system,
and their algebraic relations determine all topological data of the state.
The precise location of the string is never important,
but what matters is the topology of the strings
whether the string encircles a particular region
or whether the string has end points.
This demands that the strings be locally invisible;
any local observable on the support of the string away from the end points
should not reveal whether the string operator has acted on the ground state.
In exactly solved examples, this invisibility follows from the fact that
there are many equivalent string operators.

We characterize the invisibility without resorting to the Hamiltonian as follows.
Let $\ket \psi$ be an arbitrary quantum state on some lattice,
and $A \subseteq B$ be two regions.
\begin{defn}
An operator $O$ is said to be {\bf locally invisible at $(A,B)$ with respect to $\ket \psi$}
if for any $\ket \phi$ such that
\[
 \mathrm{tr}_{B^c} \left[ \ket \phi \bra \phi \right] = \mathrm{tr}_{B^c}\left[ \ket \psi \bra \psi \right],
\]
one has
\[
 \mathrm{tr}_{A^c} \left[ O \ket \phi \bra \phi O^\dagger \right] \propto \mathrm{tr}_{A^c} \left[ \ket \psi \bra \psi \right].
\]
If $O$ is locally invisible on $\ket \psi$ at every pair $(A,B)$
where $A$ is any disk of radius $r$, and $B$ is $t$-ball of $A$,
then we simply say that $O$ is {\bf $(r,t)$-locally invisible with respect to $\ket \psi$}.
\end{defn}

A locally invisible operator determines the local reduced density matrix on $A$
based on the local reduced density matrix on $B$ as if it were the identity operator.
In particular, it does not change the local reduced density matrix when acted on $\ket \psi$.
A direct consequence is that any locally invisible operator on a product (trivial) state
actually acts like a scalar multiplication on the state.
We will provide examples shortly.

\begin{lem}
Suppose $O$ is $(r,t)$-locally invisible on $\ket \psi = \ket{00 \cdots 0}$ for some $r \ge 1$.
Then, $O \ket \psi = \ket \psi \bra \psi O \ket \psi$.
\label{lem:local-id-prod-stabilizer}
\end{lem}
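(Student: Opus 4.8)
The plan is to feed the trivial choice $\ket\phi = \ket\psi$ into the definition of local invisibility, which pins down every small reduced density matrix of the (a priori unnormalized) vector $O\ket\psi$, and then to invoke the elementary fact that a pure state all of whose single-site reduced density matrices have rank one must be a product state.

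First I would note that, because $\ket\psi = \ket{00\cdots0}$ is a product state, $\mathrm{tr}_{R^c}[\ket\psi\bra\psi] = \ket{0}_R\bra{0}_R$ for every region $R$; in particular the choice $\ket\phi = \ket\psi$ satisfies the hypothesis $\mathrm{tr}_{B^c}[\ket\phi\bra\phi] = \mathrm{tr}_{B^c}[\ket\psi\bra\psi]$ in the definition for any region $B$. Applying $(r,t)$-local invisibility with this choice, for every disk $A$ of radius $r$ one obtains
\[
 \mathrm{tr}_{A^c}\big[\,O\ket\psi\bra\psi O^\dagger\,\big] \;\propto\; \mathrm{tr}_{A^c}[\ket\psi\bra\psi] \;=\; \ket{0}_A\bra{0}_A .
\]

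If $O\ket\psi = 0$, the claimed identity $O\ket\psi = \ket\psi\bra\psi O\ket\psi$ reduces to the tautology $0 = 0$, so assume $\ket\xi := O\ket\psi \neq 0$. For any site $i$ there is, since $r \ge 1$, a disk $A$ of radius $r$ containing $i$; tracing the displayed relation down to that single site gives $\mathrm{tr}_{\{i\}^c}[\ket\xi\bra\xi] \propto \ket{0}_i\bra{0}_i$, and since the left-hand side has trace $\|\ket\xi\|^2 > 0$ it equals $\|\ket\xi\|^2\,\ket{0}_i\bra{0}_i$, a rank-one operator. A pure state whose reduced density matrix on a subsystem has rank one is a product across that cut (by the Schmidt decomposition), so $\ket\xi = \ket{0}_i \otimes \ket{\xi_i}$ for some vector $\ket{\xi_i}$ on the complement of $i$. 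As this holds for every site $i$, we get $\ket\xi = c\,\ket{00\cdots0} = c\ket\psi$ for a scalar $c$, and taking the inner product with $\bra\psi$ identifies $c = \bra\psi O\ket\psi$, which is exactly the assertion.

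I expect no serious obstacle. The only points requiring care are that $O$ is a general operator (not necessarily unitary or Hermitian), so $O\ket\psi$ may fail to be normalized or may vanish — hence the separate treatment of the zero case and the tracking of proportionality constants via traces rather than assuming them equal to $1$ — together with the invocation of the standard fact that a rank-one reduced density matrix forces a product decomposition across the corresponding cut, which I would either cite or establish in one line from the Schmidt decomposition.
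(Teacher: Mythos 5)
Your proposal is correct and takes essentially the same route as the paper: plugging $\ket\phi=\ket\psi$ into the definition forces the reduced density matrix of $O\ket\psi$ on each site to be proportional to $\ket{0}\bra{0}$, i.e.\ $\Pi_i(O\ket\psi)=O\ket\psi$ for every single-site projector $\Pi_i=\ket{0}\bra{0}_i$, which pins down $O\ket\psi$ up to a scalar. Your extra care with the case $O\ket\psi=0$ and with the normalization of the proportionality constant is a harmless elaboration of the paper's one-line argument.
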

\begin{proof}
The product state is defined by local projectors $\Pi_i = \ket 0 \bra 0 _i$ 
acting on site $i$
by the equations $\Pi_i \ket \psi = \ket \psi$.
The local invisibility implies $\Pi_i (O \ket \psi) = (O \ket \psi$).
This equation actually determines the state $O \ket \psi$ up to scalars.
\end{proof}

On a general state $\ket \psi$, however, the locally invisible operator
need not be a stabilizer for $\ket \psi$.
As we will see later, it might map $\ket \psi$ to an orthogonal state or some other state.
By the lemma, this happens only if $\ket \psi$ is not a product state.
In fact, the existence of any locally invisible operator with nontrivial action
indicates that $\ket \psi$ is not even close to the product state.
To see this, let us first show that the local invisibility 
is invariant under local unitary transformations.

\begin{lem}
Let $W$ be a local quantum circuit of range $R$.
If $O$ is $(r,t)$-locally invisible with respect to $\ket \psi$, where $r > R$,
then $WOW^\dagger$ is $(r-R,t+R)$-locally invisible with respect to $W \ket \psi$.
\label{lem:evolved-invisibility}
\end{lem}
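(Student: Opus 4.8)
The plan is to transport a hypothetical state witnessing a failure of invisibility for $WOW^{\dagger}$ backwards through the circuit, invoke the given invisibility of $O$, and then push the conclusion forwards again; the only real content is keeping track of which disk each operator is supported on after conjugation by $W$ or by $W^{\dagger}$. Throughout I use two elementary facts: (i) two global states have equal (respectively, proportional) reduced density matrices on a region $X$ if and only if every operator supported on $X$ has the same (respectively, proportional) expectation value in the two states; and (ii) $W^{\dagger}$ is again a local quantum circuit of range $R$, so conjugation by $W$ or by $W^{\dagger}$ carries an operator supported on a disk $D$ to an operator supported on the $R$-ball of $D$.

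First I would fix a disk $A'$ of radius $r-R$ centered at a point $p$, let $B'$ be its $(t+R)$-ball, take any $\ket{\phi'}$ with $\mathrm{tr}_{{B'}^{c}}[\ket{\phi'}\bra{\phi'}]=\mathrm{tr}_{{B'}^{c}}[W\ket\psi\bra\psi W^{\dagger}]$, and set $\ket\phi:=W^{\dagger}\ket{\phi'}$. Let $A$ be the disk of radius $r$ concentric with $A'$ and let $B$ be its $t$-ball. For an operator $Y$ supported on a disk whose $R$-neighborhood lies in $B'$, fact (ii) shows $WYW^{\dagger}$ is supported in $B'$, so $\bra\phi Y\ket\phi=\bra{\phi'}WYW^{\dagger}\ket{\phi'}=\bra{W\psi}WYW^{\dagger}\ket{W\psi}=\bra\psi Y\ket\psi$, the middle equality by agreement of $\ket{\phi'}$ and $W\ket\psi$ on $B'$; hence $\ket\phi\bra\phi$ and $\ket\psi\bra\psi$ agree on the sub-region of $B'$ obtained by shrinking it by $R$. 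Provided the parameters are fitted so that this sub-region contains $B$, we conclude $\mathrm{tr}_{B^{c}}[\ket\phi\bra\phi]=\mathrm{tr}_{B^{c}}[\ket\psi\bra\psi]$, and then the $(r,t)$-local invisibility of $O$, applied at the pair $(A,B)$ with $\ket\phi$ as the test state, gives $\mathrm{tr}_{A^{c}}[O\ket\phi\bra\phi O^{\dagger}]=\lambda\,\mathrm{tr}_{A^{c}}[\ket\psi\bra\psi]$ with $\lambda=\|O\ket\phi\|^{2}\ge 0$.

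To finish, note $WOW^{\dagger}\ket{\phi'}=WO\ket\phi$, so $WOW^{\dagger}\ket{\phi'}\bra{\phi'}WO^{\dagger}W^{\dagger}=W\bigl(O\ket\phi\bra\phi O^{\dagger}\bigr)W^{\dagger}$; by fact (ii) applied to $W^{\dagger}(\cdot)W$, its reduced density matrix on $A'$ is determined by that of $O\ket\phi\bra\phi O^{\dagger}$ on the $R$-ball of $A'$, which sits inside $A$. By the previous paragraph that reduction equals $\lambda$ times the reduction of $\ket\psi\bra\psi$ on the same disk, and conjugating back by $W$ yields $\mathrm{tr}_{{A'}^{c}}[WOW^{\dagger}\ket{\phi'}\bra{\phi'}WO^{\dagger}W^{\dagger}]=\lambda\,\mathrm{tr}_{{A'}^{c}}[W\ket\psi\bra\psi W^{\dagger}]$. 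Since $A'$ and $p$ were arbitrary, this is precisely the stated local invisibility of $WOW^{\dagger}$ with respect to $W\ket\psi$; taking $\ket{\phi'}=W\ket\psi$ in particular shows that $WOW^{\dagger}$ fixes the reduced density matrix of $W\ket\psi$ on each such $A'$ up to normalization.

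The main obstacle — and essentially the only one — is the geometric fitting in the middle step: one must arrange the nested disks $A'\subseteq A$ (radii $r-R$ and $r$) together with the balls $B\subseteq B'$ so that $B'$ shrunk by $R$ still contains $B$ while $A'$ fattened by $R$ still lies in $A$, and it is exactly this constraint that prescribes how the parameters $(r,t)$ of $O$ must be adjusted to obtain those of $WOW^{\dagger}$. Once the regions are chosen consistently, nothing further is needed beyond the definition of the range of a quantum circuit, facts (i) and (ii), and cyclicity of the trace.
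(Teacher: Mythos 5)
Your proof is correct and takes essentially the same route as the paper's: pull the test state back through $W^{\dagger}$, apply the $(r,t)$-invisibility of $O$ at a concentric pair $(A,B)$, and push forward through $W$ again. The one point worth settling is the ``geometric fitting'' you flag as the main obstacle: with the literal parameters of the statement, $B'$ is the $(t+R)$-ball of $A'$ and so has radius $r+t$, its $R$-shrinkage has radius $r+t-R$, and this does \emph{not} contain the $t$-ball $B$ (radius $r+t$) of the radius-$r$ disk $A$ --- so your ``provided'' clause fails by exactly $R$. The paper's own proof quietly takes $B'$ of radius $r+t+R$, i.e.\ it really establishes $(r-R,\,t+2R)$-local invisibility of $WOW^{\dagger}$; this off-by-$R$ in the stated thickness is harmless for every later use of the lemma (where $t$ is taken with large multiplicative slack relative to $R$), but you were right to sense that the stated $(r-R,t+R)$ does not come out of the argument as written.
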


This easily follows from the general observation
that for any state $\rho$ (mixed or pure)
the reduced density matrix of $W \rho W^\dagger$ on a ball of radius $r$
is determined by the reduced density matrix of $\rho$ on the concentric ball of radius $r+R$
whenever the range of the quantum circuit $W$ is $R$.

\begin{proof}
Let $A', A, B, B'$ be concentric disks of radius $r-R$, $r$, $r+t$, $r+t+R$, respectively.
Let $\ket \phi$ be a state with the same reduced density matrix on $B'$ as $W \ket \psi$.
We have to show that $WOW^\dagger \ket \phi$
has the same reduced density matrix on $A'$ as $W \ket \psi$.

The state $W^\dagger \ket \phi$ has the reduced density matrix on $B$ that is the same as that of $\ket \psi$.
By the local invisibility, the state $O W^\dagger \ket \phi$ has the same reduced density matrix on $A$
as $\ket \psi$.
This implies that $WOW^\dagger \ket \phi$ has the reduced density matrix on $A'$ that is the same as that of $W \ket \psi$.
\end{proof}
Thus, given a locally invisible operator on a state $\ket \psi$,
one knows at least one locally invisible operator on a ``perturbed'' state $W \ket \psi$.

\subsection{Trivial states}

Here, we compute all possible values of twist pairing of locally invisible operators with respect to a product state.
\begin{theorem}
Let $P$ and $Q$ be $(r,t)$-locally invisible operators on two annuli
with respect to a product state $\ket \psi$,
where $r \ge 1$ and $t \ge 0$ such that the two regions of intersection 
of the annuli are separated by a distance $> 2(r+t)$.
Then,
\[
 \bra \psi P \infty Q \ket \psi = \bra \psi P \ket \psi \bra \psi Q \ket \psi.
\]
\label{thm:product-state}
\end{theorem}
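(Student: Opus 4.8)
The plan is to peel off everything that does not lie in the two intersection regions, reducing to a computation with two \emph{disjoint} supports, and then to use the separation hypothesis to see that one of the operators behaves trivially there.

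First, by Lemma~\ref{lem:local-id-prod-stabilizer} the local invisibility of $P$ and $Q$ on the product state forces $P\ket\psi = p\ket\psi$ and $Q\ket\psi = q\ket\psi$ with $p=\bra\psi P\ket\psi$, $q=\bra\psi Q\ket\psi$; equivalently $P\ket{0}_{A_L}=p\ket{0}_{A_L}$ and $Q\ket{0}_{A_R}=q\ket{0}_{A_R}$ on the annuli themselves. Since $\ket\psi$ is a product state and $Q$ (resp.\ $P$) acts as the identity on $A_L\setminus A_R$ (resp.\ $A_R\setminus A_L$), I would ``collapse'' the plane onto $C:=C_u\cup C_d$: writing $\mathcal P:=\bra{0}_{A_L\setminus C}\,P\,\ket{0}_{A_L\setminus C}$ and $\mathcal Q$ analogously, these are operators on $C$, and a short check that the collapse is compatible with the order-reversal defining $\infty$ shows
\[
\bra\psi P\infty Q\ket\psi=\bra{\psi'}\,\mathcal P\infty\mathcal Q\,\ket{\psi'},
\]
where $\ket{\psi'}=\ket{0}_{C_u}\otimes\ket{0}_{C_d}$ and $\infty$ on the right is taken across the bipartition $C_u\,|\,C_d$. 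Moreover $\mathcal P\ket{\psi'}=p\ket{\psi'}$, $\mathcal Q\ket{\psi'}=q\ket{\psi'}$, $\bra{\psi'}\mathcal P\ket{\psi'}=p$ and $\bra{\psi'}\mathcal Q\ket{\psi'}=q$ all follow from the collapse.

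The key step is to observe that, because $\mathrm{dist}(C_u,C_d)>2(r+t)$, every disk of radius $r$ meeting $C_u$ has its $t$-ball disjoint from $C_d$. Feeding such disks into the local invisibility of $Q$ shows that $Q$ sends any state that is $\ket{0}$ on a neighbourhood of $C_u$ (inside $A_R$) to a state that is $\ket{0}$ on $C_u$; this property is inherited by the collapsed operator, since $Q$ is the identity off $A_R$, so that
\[
(\,\Pi^\perp_{C_u}\otimes I_{C_d}\,)\,\mathcal Q\,(\,\Pi^0_{C_u}\otimes I_{C_d}\,)=0,\qquad \Pi^0_{C_u}=\ket{0}\!\bra{0}_{C_u},\ \ \Pi^\perp_{C_u}=I-\Pi^0_{C_u},
\]
equivalently $\mathcal Q\bigl(\ket{0}_{C_u}\otimes\ket{y}_{C_d}\bigr)=\ket{0}_{C_u}\otimes\mathcal Q^{(d)}\ket{y}_{C_d}$ with $\mathcal Q^{(d)}:=\bra{0}_{C_u}\mathcal Q\ket{0}_{C_u}$, an operator on $C_d$ satisfying $\bra{0}_{C_d}\mathcal Q^{(d)}\ket{0}_{C_d}=q$.

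Finally I would compute. Write $\mathcal P=\sum_j P^j_u\otimes P^j_d$ and $\mathcal Q=\sum_k Q^k_u\otimes Q^k_d$ on $C_u\otimes C_d$; then
\[
\bra{\psi'}\mathcal P\infty\mathcal Q\ket{\psi'}=\sum_{j,k}\bra{0}_{C_u}P^j_u Q^k_u\ket{0}_{C_u}\;\bra{0}_{C_d}Q^k_d P^j_d\ket{0}_{C_d}.
\]
Applying the displayed identity for $\mathcal Q$ with $\ket{y}_{C_d}=P^j_d\ket{0}_{C_d}$ and pairing the $C_d$-factor with $\bra{0}_{C_d}$ gives $\sum_k \bra{0}_{C_d}Q^k_d P^j_d\ket{0}_{C_d}\,Q^k_u\ket{0}_{C_u}=\bigl(\bra{0}_{C_d}\mathcal Q^{(d)} P^j_d\ket{0}_{C_d}\bigr)\ket{0}_{C_u}$, so the double sum collapses to
\[
\sum_j \bra{0}_{C_u}P^j_u\ket{0}_{C_u}\;\bra{0}_{C_d}\mathcal Q^{(d)} P^j_d\ket{0}_{C_d}=\bra{\psi'}\,(I_{C_u}\otimes\mathcal Q^{(d)})\,\mathcal P\,\ket{\psi'}=p\,\bra{\psi'}(I_{C_u}\otimes\mathcal Q^{(d)})\ket{\psi'}=p\,q,
\]
using $\mathcal P\ket{\psi'}=p\ket{\psi'}$ at the second equality. (One may instead use local invisibility of $P$ near $C_d$; $P$ and $Q$ enter symmetrically.) The only delicate point I anticipate is the derivation of the block-triangular relation for $\mathcal Q$ from local invisibility: if $C_u$ is too large to be covered by a single disk of radius $r$ one obtains $\ket{0}$ preserved only on the $r$-interior of $C_u$, and the collapse of the first paragraph must then be carried out with a matching boundary layer.
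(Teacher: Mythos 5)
Your proof is correct. It rests on the same two pillars as the paper's: Lemma~\ref{lem:local-id-prod-stabilizer} (so that $P$ and $Q$ act as scalars on the product state), and the operator identity $\Pi_A Q \Pi_B = Q \Pi_B$ extracted from local invisibility, with the separation hypothesis guaranteeing that the enlarged disk $B$ attached to one crossing never reaches the other. Where you genuinely differ is in the bookkeeping. The paper stays global: it inserts $\Pi_B$ for disks covering the \emph{entire} plane, commutes the resulting $\Pi_A$'s onto $P$ through the twist product (normal order near $C_u$, trivially elsewhere), and collapses $\prod_A \Pi_A = \ket\psi\bra\psi$ so that $P$ itself becomes the scalar $\bra\psi P\ket\psi$. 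You instead compress both operators onto $C_u \otimes C_d$ by taking $\bra{0}\cdots\ket{0}$ matrix elements off the crossings --- legitimate precisely because $\ket\psi$ is a product state and each operator is the identity off its own annulus --- and recast the local invisibility of $Q$ near $C_u$ as a block-triangularity of the compressed operator, which makes the final sum over components collapse by hand. The two computations are equivalent in substance; yours is more explicit but requires the extra verification that the compression commutes with the twist product, which the paper's global projector insertion sidesteps. Your closing worry about covering $C_u$ is a non-issue: take finitely many radius-$r$ disks centered at points of $C_u$; each $t$-ball then lies within distance $r+t < 2(r+t)$ of $C_u$ and so misses $C_d$, while the union of the disks themselves contains all of $C_u$, so $Q$ preserves $\ket{0}$ on all of $C_u$ and no boundary layer is needed.
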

\begin{proof}
By Lemma~\ref{lem:inv-twist-prod}, we can assume $\ket \psi = \ket{00\cdots0}$.
The reduced density operator of $\ket \psi$ is a rank 1 projector 
$\Pi_M = \ket{0\cdots 0}\bra{0\cdots 0}$ for any region $M$.
Let $A, B$ be concentric disks of radius $r$ and $r+t$, respectively.
Let $\ket \phi$ be a state such
that the reduced density matrix on $B$ is the same as that of $\ket \psi$.
This is equivalent to the condition $\ket \phi = \Pi_B \ket{ \phi'}$ for some unnormalized $\ket{\phi'}$.
The local invisibility of $Q$ says $\Pi_A Q \ket \phi = Q \ket \phi$.
Thus, we have $\Pi_A Q \Pi_B \ket{\phi'} = Q \Pi_B \ket{\phi'}$, which holds for any $\ket{\phi'}$.
Therefore,
\begin{equation}
 \Pi_A Q \Pi_B = Q \Pi_B.
 \label{eq:local-identity}
\end{equation}
Note that the support of $\Pi_B$ overlaps with at most one region of intersection of the annuli.

Place $A$ and $B$ around the upper intersection region $C_u$; see Figure~\ref{fig:annuli}.
By Eq.~\eqref{eq:local-identity},
\[
 (P \infty Q) \ket \psi = (P \infty Q) \Pi_B \ket \psi = (( P \Pi_A ) \infty Q ) \Pi_B \ket \psi
\]
For other locations of $A$ and $B$, it obviously holds that
$(P \infty Q) \ket \psi = ((P \Pi_A) \infty Q) \ket \psi$.
We conclude that
\[
 (P \infty Q) \ket \psi = (( P \prod_A \Pi_A ) \infty Q ) \ket \psi
\]
where the product over $A$ covers the whole system.
But, $\prod_A \Pi_A = \ket \psi \bra \psi$ since $\ket \psi$ is a product state.
By Lemma~\ref{lem:local-id-prod-stabilizer}, $P \ket \psi = (\bra \psi P \ket \psi) \ket \psi$.
Our theorem is thus proved.
\end{proof}
\noindent
Note that Lemma~\ref{lem:local-id-prod-stabilizer} is essentially used.
Eq.~\eqref{eq:local-identity} may be true even if $\ket \psi$ is not a product state.

\subsection{Generating a state that admits nontrivial twist pairing}
As a trivial consequence of Theorem~\ref{thm:product-state},
we obtain a many-body entanglement witness.
\begin{cor}
For a pair of annuli of which two regions of the intersection
are separated by a distance $> 2(r+t)$,
suppose there are operators $P$ and $Q$ that are $(r,t)$-locally invisible with respect to $\ket \psi$,
where $r \ge 2$ and $t \ge 0$.
If 
\[
 \bra \psi P \infty Q \ket \psi \neq 
 \bra \psi P \ket \psi \bra \psi Q \ket \psi,
\]
then the range of any local quantum circuit $W$ 
generating $\ket \psi = W \ket {00\cdots 0}$ from a product state $\ket {00\cdots 0}$
must be at least $r/10$.
\label{cor:witness}
\end{cor}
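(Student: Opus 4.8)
The plan is to argue by contradiction: assuming $\ket\psi=W\ket{00\cdots0}$ for some local quantum circuit $W$ of range $R<r/10$, I would derive $\bra\psi P\infty Q\ket\psi=\bra\psi P\ket\psi\,\bra\psi Q\ket\psi$, contradicting the hypothesis, and thereby conclude that every circuit generating $\ket\psi$ from a product state must have range at least $r/10$.

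First I would transport the twist pairing back to the product state. Since $\bra\psi=\bra{00\cdots0}W^\dagger$, one has $\bra\psi P\infty Q\ket\psi=\bra{00\cdots0}\,W^\dagger(P\infty Q)W\,\ket{00\cdots0}$, and likewise $\bra\psi P\ket\psi=\bra{00\cdots0}W^\dagger PW\ket{00\cdots0}$ and $\bra\psi Q\ket\psi=\bra{00\cdots0}W^\dagger QW\ket{00\cdots0}$. Because $\mathrm{dist}(C_u,C_d)>2(r+t)>r>10R$, the range restriction~\eqref{eq:range-restriction} of Lemma~\ref{lem:inv-twist-prod} is satisfied, so conjugation by $W^\dagger$ distributes over the twist product:
\[
 W^\dagger(P\infty Q)W=(W^\dagger PW)\infty(W^\dagger QW).
\]
Hence it would suffice to show that the twist pairing of $P':=W^\dagger PW$ and $Q':=W^\dagger QW$ against $\ket{00\cdots0}$ factorizes.

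Next I would invoke Lemma~\ref{lem:evolved-invisibility} with the circuit $W^\dagger$ (of range $R<r$) to conclude that $P'$ and $Q'$ are $(r-R,\,t+R)$-locally invisible with respect to $W^\dagger\ket\psi=\ket{00\cdots0}$, supported on the annuli obtained from $A_L,A_R$ by fattening by $R$, and then apply Theorem~\ref{thm:product-state} to $P',Q'$ and the product state. This gives $\bra{00\cdots0}P'\infty Q'\ket{00\cdots0}=\bra{00\cdots0}P'\ket{00\cdots0}\,\bra{00\cdots0}Q'\ket{00\cdots0}$, which by the previous paragraph is exactly the desired identity $\bra\psi P\infty Q\ket\psi=\bra\psi P\ket\psi\,\bra\psi Q\ket\psi$.

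The part that needs care, and which I expect to be the only real obstacle, is checking that the fattened configuration still meets the hypotheses of Theorem~\ref{thm:product-state}: the locality radius must stay $\ge1$ and the two intersection regions of the fattened annuli must remain separated by more than $2\big((r-R)+(t+R)\big)=2(r+t)$. The radius is easy --- if $R=0$ then $W$ is a product of single-qudit unitaries, $\ket\psi$ is already a product state, and Theorem~\ref{thm:product-state} applies to $P,Q$ directly (using $r\ge2\ge1$), while if $R\ge1$ then $R<r/10$ forces $r>10$ and hence $r-R>9$. The separation is the routine geometric point: fattening two comparable-size overlapping annuli by $R$ brings their two small intersection regions together by only $O(R)$, and the generous factor $1/10$ in $R<r/10$ (equivalently $R<\mathrm{dist}(C_u,C_d)/10$, as $\mathrm{dist}(C_u,C_d)>2(r+t)$) is precisely the slack that keeps the separation within the scope of the theorem. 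Everything else is a mechanical chaining of Lemmas~\ref{lem:inv-twist-prod} and~\ref{lem:evolved-invisibility} with Theorem~\ref{thm:product-state}.
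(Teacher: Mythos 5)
Your proposal is correct and follows essentially the same route as the paper's own proof: conjugate $P$ and $Q$ back by $W^\dagger$, invoke Lemma~\ref{lem:evolved-invisibility} to get $(r-R,t+R)$-local invisibility with respect to the product state, use Lemma~\ref{lem:inv-twist-prod} to preserve the twist pairing, and derive a contradiction with Theorem~\ref{thm:product-state}. You are in fact somewhat more careful than the paper, which does not spell out the $R=0$ edge case or the bookkeeping of the fattened intersection regions.
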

\begin{proof}
Let $R$ be the range of $W$. Suppose $R < r/10$.
$W^\dagger P W$ and $W^\dagger Q W$ are $(r-R,t+R)$-locally invisible with respect to $W^\dagger \ket \psi$
by Lemma~\ref{lem:evolved-invisibility},
and their twist pairing is still the same by Lemma~\ref{lem:inv-twist-prod}.
This contradicts to Theorem~\ref{thm:product-state}.
\end{proof}

As a simple example, consider a trivial product state on a sphere.
Remove a qubit at the North pole and another at the South pole,
and bring a Bell pair $(\ket{00} + \ket{11})/\sqrt 2$ 
to place a half at the North pole and the other half at the South pole.
Now, let overall state be $\ket \psi$.
The state is stabilized by two-qubit non-local operators
\begin{align*}
 \bar X = \sigma^x_\text{South pole} \otimes \sigma^x_\text{North pole},\\
 \bar Z = \sigma^z_\text{South pole} \otimes \sigma^z_\text{North pole}.
\end{align*}
We regard one great circle through the poles as the left annulus,
and another great circle through the poles that is perpendicular to the first one as the right annulus.
The expectation values of the two non-local operators are $1$, but their twist pairing is $-1$.
The reduced density matrix at a pole consists of product state part and $I/2$ (completely mixed) for the pole.
Both $\bar X$ and $\bar Z$ are easily seen to be locally invisible.
\emph{In fact, any small-depth quantum circuit that stabilizes the state is locally invisible.}
(See the proof of Lemma~\ref{lem:evolved-invisibility}.)
Corollary~\ref{cor:witness} implies that
any quantum circuit that generates $\ket \psi$ from a product state must have depth 
that is at least linear in the system's diameter.
This linear bound is tight up to constants;
create a Bell pair locally, and transport a half of the Bell pair by swap operators.
This generating circuit for $\ket \psi$ 
has depth approximately equal to the circumference of the great circle.

A similar conclusion can be drawn for a topologically ordered state.
Consider a toric code state on the sphere.
We know that there are loop operators $\bar X$ and $\bar Z$ consisting of $\sigma^x$ and $\sigma^z$, respectively,
going around the great circles.
They both stabilizes the state $\bar X \ket \psi = \ket \psi = \bar Z \ket \psi$, but  their twist pairing is
\[
 \bra \psi \bar X \infty \bar Z \ket \psi = -1 .
\]
They are depth-1 quantum circuits that stabilize the state, so they are locally invisible.
Therefore, a linear-depth circuit is needed to generate the toric code state from a product state.
Note that since there is a unique state
the arguments of Refs.~\cite{BravyiHastingsVerstraete2006generation,KoenigPastawski2014Generating}
that utilize the local indistinguishability between two orthogonal states
cannot be applied.

Before we end this section,
we mention a marginal generalization of Theorem~\ref{thm:product-state}.
Though we have exclusively stated the theorem using the annuli,
the geometry of the annuli is not too important.
The key property of the two annuli is that the intersection consists of far separated two regions,
where two locally invisible operators are respectively supported.
The theorem remains true in a situation where
$P$ acts on the whole system and $Q$ acts on a far separated two disks.
The proof for this situation requires hardly any modification.
This generalization can be applied to
the Greenberger-Horne-Zeilinger state
\[
\ket{\text{GHZ}} = \frac{1}{\sqrt{2}} \ket{00\cdots 0} + \frac{1}{\sqrt{2}}\ket{11\cdots 1}
\]
on a line or on a plane.
The GHZ state is stabilized by $\bar X = (\sigma^x)^{\otimes n}$ 
and $\bar Z = \sigma^z_i \otimes \sigma^z_j$,
where $n$ is the total number of qubits, and $i,j$ are distinct qubits.
Taking the qubits $i$ and $j$ that are far apart,
we see that the twist pairing of $\bar X$ and $\bar Z$ is $-1$.
Therefore, the GHZ state requires a linear depth quantum circuit to generate.

\section{An invariant of states under shallow quantum circuits}
\label{sec:invariant}

The locally invisible operators discussed in the previous section
is defined in terms of a given state. However, they do not in general form an algebra;
if $P$ and $Q$ are $(r,t)$-locally invisible operators, neither $P+Q$ nor $PQ$ has
to be $(r,t)$-locally invisible.
In order to choose useful and canonical elements from the set of all locally invisible operators with respect to a ground state,
we resort to Hamiltonians, and effectively give an algebra structure to the set of all locally invisible operators.

To this end, we assume technical conditions on the Hamiltonian.
The first one is so-called {\em local topological order} condition, which, 
roughly speaking, asserts that
local reduced density matrix of a ground state is determined locally.
This condition is used in a gap stability proof of topological order~\cite{BravyiHastings2011short,MichalakisZwolak2013Stability}.
The second one is an independence of the logical algebra $\mathcal A / \mathcal N$ on the size of its support.
Recall that the logical algebra is the algebra $\mathcal A$ of all (string) operators on an annulus that commute
with every term of the Hamiltonian,
modulo null operators $\mathcal N$ that annihilate any state without excitations on the annulus.
(See Section~\ref{sec:tildeSmatrix} or 
Definition~\ref{defn:H-null} in Section~\ref{sec:ass-stable-logical-algebra} below.)
The algebra $\mathcal A$ and its ideal $\mathcal N$ generally depends on the radius of the annulus
and the thickness of the radius.
The second condition, which we call the \emph{stable logical algebra condition}, asserts
that the logical algebra is independent of the thickness for a given radius.
We will rigorously state the conditions in the following subsections.

\begin{theorem}
Suppose a state $\ket \psi$ on a plane of radius $>L$
admits an unfrustrated local commuting projector Hamiltonian $H$ of interaction range $w$
satisfying the local topological order condition and the stable logical algebra condition
such that $\ket \psi$ is a ground state of $H$.
Then, whenever $ 1200 w < 60 t < \rann < L $ the following is true.

For any such Hamiltonians $H_1, H_2$, the logical algebras
\[
 \mathcal C_{i} = \mathcal A_t^{H_i} / \mathcal N_t^{H_i} \quad i = 1,2
\]
on an annulus of radius $\rann$ and thickness $t$
are isomorphic by a map that preserves twist pairing.
Thus, $\mathcal C(\ket \psi) = \mathcal C_i$ is well-defined in terms of $\ket \psi$.
Moreover, for any quantum circuit $W$ of range $< t$,
the logical algebra $\mathcal C( W \ket \psi )$ is isomorphic to $\mathcal C(\ket \psi)$.
In particular, the $\tilde S$-matrix for $\ket \psi$ defined in Eq.~\eqref{eq:tildeS} is invariant under $W$ 
up to permutations of rows and columns.
\label{thm:invariant}
\end{theorem}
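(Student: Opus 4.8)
The plan is to prove the three assertions in turn, with the first one—the existence of a twist-pairing-preserving isomorphism between $\mathcal C_1$ and $\mathcal C_2$ on a common annulus—carrying the technical weight, and the latter two following by combining it with Lemma~\ref{lem:inv-twist-prod} and Lemma~\ref{lem:evolved-invisibility}. The guiding idea is that the logical algebra $\mathcal C_i = \mathcal A_t^{H_i}/\mathcal N_t^{H_i}$ should be recoverable from $\ket\psi$ alone, because (i) the \emph{local topological order} condition says the ground-space structure near the annulus is pinned by the local reduced density matrices of $\ket\psi$, so an operator that commutes with the terms of $H_1$ and an operator that commutes with the terms of $H_2$ have the ``same'' action on the relevant sector regardless of which Hamiltonian they came from; and (ii) the \emph{stable logical algebra} condition lets us pass freely between annuli of different thickness, which is what is needed to realize an element of $\mathcal A_t^{H_1}$ as (equivalent to) an element of $\mathcal A_{t'}^{H_2}$ on a slightly fatter or thinner annulus. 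Concretely, I would first argue that a representative of a class in $\mathcal C_1$, being supported on the thickness-$t$ annulus and commuting with all $H_1$-terms there, acts on $\ket\psi$ in a way that is determined—via local topological order—by the reduced density matrices on balls of radius $O(w)$; hence it defines a \emph{locally invisible} operator at the relevant $(A,B)$ scales with respect to $\ket\psi$. This identifies the abstract algebra element with a state-intrinsic object, and the same applies to $\mathcal C_2$. Then, using stable logical algebra on \emph{both} Hamiltonians, I would build the map $\mathcal C_1 \to \mathcal C(\ket\psi) \leftarrow \mathcal C_2$ by sending a class to the locally-invisible operator it induces, and checking this is a $*$-algebra homomorphism with inverse (injectivity is exactly the statement that a representative annihilating the excitation-free subspace—i.e.\ lying in $\mathcal N$—induces the zero local action, which is where the null-ideal definition and local topological order interlock).

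Once the common annulus case is in hand, the isomorphism preserves twist pairing because the twist pairing $\bra\psi P\infty Q\ket\psi$ was already shown (around Eq.~\eqref{eq:Hnull-null}) to be well-defined on the quotient and, by the discussion there, depends only on $\ket\psi$ and the geometric data of the annuli—both of which are held fixed—so once the \emph{classes} are matched up, their pairings coincide. For the thickness-independence within a \emph{single} Hamiltonian (the statement that $\mathcal C_i$ does not depend on $t$ in the stated window $1200w<60t<\rann$), I would invoke the stable logical algebra condition directly, which is its \emph{raison d'\^etre}; the numerical constants are there to guarantee that the various fattened annuli, and the two intersection diamonds $C_u,C_d$, remain well-separated and inside the radius-$L$ disk at every stage of the argument, so that Lemma~\ref{lem:inv-twist-prod} applies.

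For the behavior under a quantum circuit $W$ of range $<t$: conjugation by $W$ sends $H_i$ to another unfrustrated local commuting projector Hamiltonian $W H_i W^\dagger$ of interaction range $<w+2\,\mathrm{range}(W)$, still satisfying both conditions (local topological order and stable logical algebra are preserved under local unitary conjugation because they are statements about commutation relations and reduced density matrices, all of which transform covariantly), and $W\ket\psi$ is its ground state. The map $O\mapsto WOW^\dagger$ carries $\mathcal A_t^{H_i}$ to $\mathcal A^{WH_iW^\dagger}$ on a slightly fattened annulus and carries $\mathcal N$ to $\mathcal N$, hence descends to an isomorphism of logical algebras; by Lemma~\ref{lem:inv-twist-prod} this isomorphism preserves twist pairing (using $\mathrm{range}(W)<t$ together with $60t<\rann$ to verify the hypothesis $R<\mathrm{dist}(C_u,C_d)/10$). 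Composing with the first part identifies $\mathcal C(W\ket\psi)$ with $\mathcal C(\ket\psi)$, and since the $\tilde S$-matrix is the table of twist pairings of the fundamental projectors $\pi_a$, and the isomorphism must permute fundamental projectors (as it preserves the center and the simple-summand decomposition of Eq.~\eqref{eq:fundamentalprojectors}), $\tilde S$ is invariant up to a simultaneous row/column permutation.

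The main obstacle I anticipate is the first step: showing that an $H_1$-adapted operator and an $H_2$-adapted operator with the ``same'' local action really do represent the same class, i.e.\ gluing the two quotient algebras together. The subtlety is that $\mathcal A^{H_1}$ and $\mathcal A^{H_2}$ are genuinely different operator sets, and the only bridge is their common action on the excitation-free subspace of $\ket\psi$ near the annulus; making the local topological order condition yield a faithful-enough comparison (in particular, controlling that the quotient by $\mathcal N$ exactly kills the kernel of this action, with the enlargement $A_{+w}$ in Eq.~\eqref{eq:def-nullOP} matched to the interaction ranges of \emph{both} Hamiltonians) is where the stated constant $1200w<60t$ and the stable logical algebra condition will have to be used carefully. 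I expect this to require introducing, as the problem statement promised, the Hamiltonian-independent reformulation of null/locally-invisible operators and proving a lemma that the quotient $\mathcal A_t^{H}/\mathcal N_t^{H}$ embeds into the algebra of locally invisible operators modulo those acting as scalars on $\ket\psi$—with the stable logical algebra condition upgrading the embedding to an isomorphism independent of $H$.
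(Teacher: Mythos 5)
Your outline follows the same route as the paper: compare the two Hamiltonian-dependent quotients through a state-intrinsic object (locally invisible operators modulo locally null ones), let the stable logical algebra condition upgrade the comparison to an isomorphism, and handle the circuit statement by conjugating the Hamiltonian and invoking Lemma~\ref{lem:inv-twist-prod}. But the step you yourself flag as the main obstacle is where the actual proof lives, and your proposal supplies no mechanism for it. The missing idea is the symmetrization map $\phi=\prod_j\phi_j$ with $\phi_j(O)=\tfrac12\bigl(O+(2h_j-1)O(2h_j-1)\bigr)$ of Eq.~\eqref{eq:sym-composition}: it is the only way the paper converts a locally invisible operator back into an element of $\mathcal A_{t+w}$, and the accompanying Lemma~\ref{lem:symmetrization} --- that $S(T-\phi(T))$ and $(T-\phi(T))S$ are locally null for locally invisible $S,T$ --- is the technical core, proved by combining the local topological order condition with nested disks $D_j^{t/16},\dots,D_j^{t/2}$ (this is where the numerical constants in the theorem are actually consumed). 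Without some such construction your proposed map $\mathcal C_1\to\mathcal C(\ket\psi)\leftarrow\mathcal C_2$ has a forward direction but no candidate inverse, and injectivity/surjectivity cannot be checked.

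A second, related problem: your phrase ``the algebra of locally invisible operators'' presupposes something false. As noted at the start of Section~\ref{sec:invariant}, the set $\mathcal I_t$ of locally invisible operators is \emph{not} closed under addition or multiplication, so one cannot verify a $*$-homomorphism property directly on it. The paper circumvents this by composing $\mathcal A_t^{(1)}/\mathcal N_t^{(1)}\to\mathcal I_t/\mathcal M_t\to\mathcal A_{t+w}^{(2)}/\mathcal N_{t+w}^{(2)}\to\mathcal I_{t+w}/\mathcal M_{t+w}\to\mathcal A_{t+2w}^{(2)}/\mathcal N_{t+2w}^{(2)}$ and proving multiplicativity through the chain $\phi_2(xy)\sim xy\sim x\phi_2(y)\sim\phi_2(x)\phi_2(y)$ modulo locally null operators, with a \emph{second} symmetrization needed to push the remainder into $\mathcal N_{t+2w}$ (Corollary~\ref{cor:unique-logical-algebra}). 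Finally, the state-intrinsic quotient must be by \emph{locally null} operators $\mathcal M_t$ (sums of pieces annihilated by the local support projectors $\Pi_{D_i}$ of $\ket\psi$), not by ``operators acting as scalars on $\ket\psi$'' as your last paragraph suggests; the former is what makes the twist pairing descend to the quotient via Eq.~\eqref{eq:locally-null-zero-twist-pairing}, and is the correct mirror of the $H$-null ideal. Your treatment of the circuit part is essentially right in spirit, though the preservation of the stable logical algebra condition under conjugation is not automatic from ``covariance'' --- the supports shift by the range $R$, and the paper has to run a separate surjectivity/injectivity argument for the induced maps $\bar\omega$ and $\bar\omega^\dagger$.
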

\begin{proof}[Sketch of the proof]
We will establish the following statements in Section~\ref{sec:proof}.
\begin{itemize}
\item[(i)] The local topological order condition implies that any operator of $\mathcal A_t$ is locally invisible.
Conversely, any locally invisible operator can be ``symmetrized'' to become a member of $\mathcal A_{t'}$,
where $t'$ is slightly larger than $t$.
\item[(ii)] The stable logical algebra condition implies that the logical algebra $\mathcal A_t / \mathcal N_t$
 is in bijection with the set of all locally invisible operators modulo ``locally null'' operators.
\item[(iii)] Since the locally invisible and null operators are defined in terms of the state, 
the logical algebra is the same regardless of the Hamiltonian whenever (i) and (ii) are valid. This proves the first part.
\item[(iv)] The evolved Hamiltonian $WHW^\dagger$ satisfies the local topological order and stable logical algebra conditions.
Therefore, the logical algebra for $W \ket \psi$ can be calculated from $WHW^\dagger$,
and is thus isomorphic to the logical algebra for $\ket \psi$ by the conjugation $W \cdot W^\dagger$.
\item[(v)] The set of fundamental projectors is uniquely determined by the algebra. 
We know it for $\mathcal C^{WHW^\dagger}$ from that of $\mathcal C^{H}$.
The twist product is not affected by small depth quantum circuits.
This proves the second part.
\end{itemize}
\end{proof}

\subsection{Transformations between ground states of quantum double models with finite abelian groups}

From the calculations of Section~\ref{sec:SmatrixExample}, we know that the 
$\mathbb{Z}_N$-toric code Hamiltonian is unfrustrated and commuting, 
and satisfies the stable logical algebra condition.
By a criterion in Ref.~\cite{BravyiHastingsMichalakis2010stability} 
we know that it also satisfies the local topological order condition.
Juxtaposing a layer of $\mathbb{Z}_N$-toric code with another layer of $\mathbb{Z}_{N'}$-toric code,
we have a ground state of a quantum double model with the group $\mathbb{Z}_N \times \mathbb{Z}_{N'}$~\cite{Kitaev2003Fault-tolerant}.
This is because, firstly, a gauge transformation under a product group
is a product of gauge transformations of the component groups,
and, secondly, a vanishing flux condition for a product group 
is equivalent to vanishing flux conditions for each component group.
Since any finite abelian group is a direct product of finitely many cyclic groups,
we see that the ground state of a quantum double model with a finite abelian group
satisfies the conditions of Theorem~\ref{thm:invariant}.
Inserting any ancillary qudits in the trivial product state does not change the logical algebra at all.
Therefore,
we have a separation of phases of matter by quantum circuits of at least linear depth:
\begin{theorem}
Let $\ket{\psi(G)}$ denote a ground state of quantum double model with any finite abelian group $G$
with possible ancillary qudits in the trivial product state.
Let $L$ be the radius of a disk contained in the system.
If there is a local quantum circuit $W$ such that $\ket{\psi(G')} = W \ket{\psi(G)}$, then
either $G \cong G'$ or the depth of $W$ must be at least $cL$ for some constant $c>0$.
\label{thm:separation}
\end{theorem}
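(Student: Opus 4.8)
The plan is to reduce Theorem~\ref{thm:separation} to Theorem~\ref{thm:invariant} by exhibiting a concrete invariant that distinguishes the ground states $\ket{\psi(G)}$ for non-isomorphic finite abelian $G$. First I would recall, from the discussion in Section~\ref{sec:SmatrixExample}, that for the $\mathbb{Z}_N$-toric code the logical algebra $\mathcal A/\mathcal N$ on an annulus is $\mathbb{C}[\mathbb{Z}_N\times\mathbb{Z}_N]$, with fundamental projectors indexed by pairs $(a,b)\in\mathbb{Z}_N\times\mathbb{Z}_N$, and that the twist pairing reproduces (up to the normalization of Eq.~\eqref{eq:tildeSisDSD}) the familiar $\mathbb{Z}_N$ S-matrix entry $\tilde S_{(a,b),(a',b')}\propto \omega^{ab'+a'b}$ with $\omega=e^{2\pi i/N}$. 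For a general finite abelian group $G\cong \prod_i \mathbb{Z}_{N_i}$, stacking the layers multiplies the algebras: the logical algebra is $\mathbb{C}[G\times \hat G]$ and the twist-pairing matrix is the tensor product of the cyclic ones. This verifies that $\ket{\psi(G)}$ satisfies both hypotheses of Theorem~\ref{thm:invariant} (commuting projector, unfrustrated, local topological order by the criterion of Ref.~\cite{BravyiHastingsMichalakis2010stability}, and stable logical algebra by the explicit computation, all unaffected by appended product-state ancillas).

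Next I would extract from the pair $(\mathcal C(\ket\psi),\text{twist pairing})$ a genuine isomorphism invariant of $G$. The cleanest choice is the dimension of the logical algebra as a complex vector space: $\dim_{\mathbb C}\mathcal C(\ket{\psi(G)})=|G|^2$. By Theorem~\ref{thm:invariant}, $\mathcal C(W\ket{\psi(G)})\cong\mathcal C(\ket{\psi(G)})$ whenever $W$ has range $<t$, so $\dim_{\mathbb C}\mathcal C$ is invariant under shallow circuits; hence if $\ket{\psi(G')}=W\ket{\psi(G)}$ with $W$ of small range then $|G|=|G'|$. To rule out non-isomorphic groups of the same order, I would use the finer twist-pairing data: the matrix $\tilde S$, viewed up to simultaneous permutation of rows and columns and up to the overall scalars $d_a/\mathcal D$, records the bicharacter $G\times\hat G\to\mathbb C^\times$, from which $G$ is recovered (e.g. via the orders appearing as the sizes of the cyclic subgroups generated by rows, or simply because the perfect pairing $G\times\hat G$ determines $\hat G$ and hence $G$ for abelian $G$). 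Thus $\tilde S(\ket{\psi(G)})$ and $\tilde S(\ket{\psi(G')})$ agree up to row/column permutation only if $G\cong G'$.

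Finally I would assemble the quantitative bound. Fix $L$ as in the statement and choose the annulus parameters as in Theorem~\ref{thm:invariant}, e.g. $t=\lfloor L/c'\rfloor$ for a suitable constant so that $1200w<60t<\rann<L$ holds (the interaction range $w$ of the quantum double model is $O(1)$). A depth-$D$ circuit $W$ has range at most $D$; if $\ket{\psi(G')}=W\ket{\psi(G)}$ with $D<t$, Theorem~\ref{thm:invariant} forces $\tilde S(\ket{\psi(G')})$ to be a row/column permutation of $\tilde S(\ket{\psi(G)})$, hence $G\cong G'$ by the previous paragraph. Contrapositively, if $G\not\cong G'$ then $D\ge t\ge cL$ for the resulting constant $c>0$, which is the claim.

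The main obstacle I anticipate is not any single step above but verifying that stacking genuinely behaves multiplicatively \emph{within the present framework}: one must check that for the product Hamiltonian the algebra $\mathcal A$, the ideal $\mathcal N$, and hence the quotient $\mathcal C$ on an annulus of the required thickness really factor as tensor products of the single-layer data, and that the twist product respects this factorization. The paper already sketches why the gauge and flux conditions decouple across layers, so this should be routine, but it is the place where one actually needs the commuting-projector structure and the stable logical algebra condition to interact correctly; the rest is a formal consequence of Theorem~\ref{thm:invariant} together with the elementary fact that an abelian group is determined by its perfect self-pairing.
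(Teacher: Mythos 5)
Your proposal is correct and follows essentially the same route as the paper: reduce to Theorem~\ref{thm:invariant}, verify the hypotheses for abelian quantum doubles via the layer-by-layer tensor factorization of the logical algebra and of the twist pairing, and then reconstruct $G$ from the unsorted matrix $\tilde S$. The only (cosmetic) difference is in the last step, where the paper applies the Verlinde-type formula Eq.~\eqref{eq:VerlindeFormula} to recover the fusion rules of $G\times G$ on the unsorted label set, while you propose reading off the group via pointwise products and orders of rows --- the same computation in different clothing, since both amount to recognizing the rows of $|G|^2\tilde S$ as the character group $G\times\hat G$ closed under multiplication.
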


In view of a modular tensor category description of anyons,
the theorem is a simple corollary of the Verlinde formula
since Theorem~\ref{thm:invariant} guarantees that the $S$-matrices
of $\ket{\psi(G')}$ and $W \ket{\psi(G)}$ are the same.
For an abelian group $G$, the anyons of the quantum double model are all abelian,
and the fusion rules are the same as the group operation of $D(G) \cong G \times G$.
The Verlinde formula~\cite{Verlinde1988}
\[
 N^c_{ab} = \sum_x \frac{S_{ax}S_{bx}S_{cx}^*}{S_{1x}}
\]
tells us how to extract the fusion rules out of the $S$-matrix.
Therefore, the group $G \times G$ and hence $G$ is reconstructed.
The proof below is a direct translation of this argument into what we have rigorously.

\begin{proof}
Without loss of generality, we can assume that the whole system is on a sphere of radius $L$.
Theorem~\ref{thm:invariant} says that if $W$ had small depth,
then $\tilde S( \ket{\psi(G)} ) = \tilde S( W \ket{\psi(G)} ) = \tilde S( \ket{ \psi(G')})$.
(The independence of $\tilde S$ on Hamiltonians is essentially used.)
We will reconstruct the group $G$ from $\tilde S ( \ket{ \psi(G)} )$.

As we have just noted above, if $G = H \times K$ is a product group,
then $\ket{\psi(G)} = \ket{\psi(H)} \otimes \ket{\psi(K)}$.
Thus, the logical algebra $\mathcal C(G)$ on a (left) annulus of $\ket{\psi(G)}$ 
is a tensor product of those of $\ket{\psi(H)}$ and $\ket{\psi(K)}$:
\[
 \mathcal C(G) = \mathcal C(H) \otimes \mathcal C(K) .
\]
Then, the fundamental projectors of the logical algebra that give the decomposition into simple subalgebras
are the tensor products of those of $\mathcal C(H)$ and $\mathcal C(K)$:
\[
 \pi^G_{(ab)} = \pi^H_a \otimes \pi^K_b .
\]
where $(ab)$ is a double index. It follows that
\begin{align}
 \tilde S_{(ab)(a'b')}(G) &= \bra{\psi(G)} \pi^G_{(ab)} \infty \pi^G_{(a'b')} \ket{\psi(G)} \nonumber \\
 &= \bra{\psi(H)} \pi^H_a \infty \pi^H_{a'} \ket{\psi(H)} 
 \cdot \bra{\psi(K)} \pi^K_b \infty \pi^K_{b'} \ket{\psi(K)} \nonumber \\
 &= \tilde S_{aa'}(H) \tilde S_{bb'}(K) \label{eq:tildeS-tensorproduct}.
\end{align}
The matrix $\tilde S^{(d)}(\ket{\psi(G)})$ for a cyclic group $G = \mathbb Z_d$ is simple to compute.
The fundamental projectors are
\[
 \pi_{a_x a_z} = \frac{1}{d^2}\left( \sum_{k=0}^{d-1} \omega_d ^{k a_x} \bar X^k \right) \left( \sum_{k=0}^{d-1} \omega_d^{k a_z} \bar Z ^k \right)
\]
where $\omega_d = \exp( 2\pi i / d)$ is the $d$-th root of unity, 
$a_x, a_z = 0,1,\ldots,d-1$ label the $d^2$ fundamental projectors,
and $\bar X, \bar Z$ are loop operators consisting respectively of
unitary operators
\begin{equation}
X = \sum_{k \in \mathbb Z_N} \ket{ k + 1 \mod N}\bra{k},
\quad
Z = \sum_{k \in \mathbb Z_N} \omega_d^k \ket{k} \bra{k},
\label{eq:XZ}
\end{equation}
and their inverses. The loop operators satisfy
\begin{align*}
\bar X^n \bar Z^m \infty \bar X^{n'} \bar Z^{m'} = \omega_d^{-nm'-n'm} \bar X^{n+n'} \bar Z^{m+m'},
\end{align*}
We have used our convention that an operator on the left (right) of the symbol $\infty$ is on the left (right) annulus.
(The phase factor $\omega_d^{-nm'-n'm}$ could be inversed, depending on the orientation convention of the lattice.)
A direct computation yields
\[
 \tilde S^{(d)}_{(a_x a_z),( a'_x a'_z)} = \frac{1}{d^2} \omega_d^{a_z a'_x + a_x a'_z}.
\]
(This is consistent with Eq.~\eqref{eq:tildeSisDSD}.)
Consider
\begin{equation}
 N^c_{ab} := d^4 \sum_p \tilde S^{(d)}_{ap} \tilde S^{(d)}_{bp} \tilde S^{(d)*}_{c p}
 = \delta_{a_x + b_x, c_x} \delta_{a_z + b_z, c_z}
 \label{eq:VerlindeFormula}
\end{equation}
where $\delta$ is the Kronecker delta and $+$ in the subscript of $\delta$ is modulo $d$.
This means that $N^c_{ab}$ reveals the group structure of $\mathbb{Z}_d \times \mathbb{Z}_d$
on the unsorted label set $\{ a \} = \{ (a_x, a_z) : a_x, a_z \in \mathbb{Z}_d \}$ of the fundamental projectors.

Eq.~\eqref{eq:VerlindeFormula} generalizes to any product group $G$ of cyclic groups
using Eq.~\eqref{eq:tildeS-tensorproduct}.
Therefore, the group $G \times G$ can be reconstructed just from $\tilde S$ whose rows and columns are unsorted.
The finite abelian group $G \times G$ is uniquely determined by $G$,
which completes the proof.
\end{proof}

We believe that the group $G$ being abelian is not essential in the theorem.
For a nonabelian group $G$, one has to check
the stable logical algebra condition, which would involve 
nontrivial computations~\cite{BombinMartin-Delgado2008Family}.
The local topological order condition is proved, albeit implicitly, in~\cite[Sec.~3]{FiedlerNaaijkens2014Haag}.
Note that Theorem~\ref{thm:invariant} states that the $\tilde S$-matrix is invariant,
and we have shown for the abelian quantum double models $\tilde S$ is a complete invariant.
We have not addressed how fine or coarse the invariant $\tilde S$ is in general.
However, it will \emph{not} be the case that each quantum double model with general finite group $G$
gives a unique $\tilde S$. This is because there exist non-isomorphic groups 
that give rise to isomorphic modular tensor categories~\cite{Naidu2007}.

\subsection{Assumption I: Local topological order}
\label{sec:ass-ltqo}

In this and the next subsections, we define the assumptions of Theorem~\ref{thm:invariant}
and explain how they are physically motivated.

Suppose $\ket \psi$ is any ground state of a local commuting Hamiltonian
\begin{equation}
H = - \sum_j h_j ,
\label{eq:local-H}
\end{equation}
where $h_j=h_j^2$ are local projectors such that $[h_j,h_k] = 0$.
We assume that the Hamiltonian is frustration-free:
\begin{equation}
 h_j \ket \psi = \ket \psi \text{ for all } j.
 \label{eq:stabilizer}
\end{equation}
Further we assume that the reduced density matrix on a disk $D$ is determined by those $h_j$
that acts nontrivially on $D$ whenever $D$ does not wrap around the whole system.

That is, if there is a state $\ket \phi$ such that
\begin{equation}
 h_j \ket \phi = \ket \phi \text{ for every $h_j$ that meets the disk $D$},
 \label{eq:local-ground-state}
\end{equation}
then the reduced density matrix of $\ket \phi$ on $D$ is the same as that of $\ket \psi$
\begin{equation}
 \mathrm{tr}_{D^c} \ket \phi \bra \phi = \mathrm{tr}_{D^c} \ket \psi \bra \psi .
 \label{eq:same-rho}
\end{equation}
This condition is called {\bf local topological order} condition~\cite{MichalakisZwolak2013Stability},
and has been used to show the stability of the energy gap under small but arbitrary perturbations~\cite{BravyiHastingsMichalakis2010stability}.
This is a property of the Hamiltonian; there could be two different Hamiltonians whose ground states are equal
while one satisfies the local topological order condition but the other does not.
It is somewhat related to the homogeneity of the Hamiltonian.
Consider $H = - \sum_i \sigma^z_i \sigma^z_{i+1} - \sigma^z_0$,
the Ising model on a line with magnetic field at one particular spin.
The ground state is unique with all spins up.
On a region far from the spin where the field term acts, all-spin-down state satisfies Eq.~\eqref{eq:local-ground-state}.
Thus, Eq.~\eqref{eq:same-rho} cannot be satisfied.
On the other hand, $H_0 = -\sum_i \sigma^z_i$ has the same unique ground state
such that Eq.~\eqref{eq:local-ground-state} implies Eq.~\eqref{eq:same-rho}.

We allow the Hamiltonian to have degenerate ground space, though we do not require it.
If $H$ has degeneracy, which might depend on the topology of the underlying lattice,
then our local topological order condition requires that they should be locally indistinguishable.
Therefore, any classical Hamiltonian with local topological order condition
is essentially the trivial Hamiltonian $H_0$ with non-degenerate ground state.
Although the trivial product state is not usually termed as topologically ordered,
this is consistent since the trivial state can be regarded as an anyon system with the sole anyon, the vacuum.

\begin{remark}\label{rem:vacuum-distinguished}
In Section~\ref{sec:connection} we have mentioned that the vacuum label ``1'' is distinguished.
This is a consequence of the local topological order condition.
Recall that the particle types are defined by the fundamental projectors of the logical algebra 
$\mathcal C = \mathcal A / \mathcal N$ on an annulus of some large size.
An operator of the logical algebra is not necessarily local, but is anyway supported on
a disk that does not wrap the whole system.
By definition of the logical algebra, $O \ket \psi$ is a ground state for any $O \in \mathcal C$.
(We are abusing notation here as before. $O\ket \psi$ should be understood as
the state $O' \ket \psi$ for any representative $O' \in O \in \mathcal A / \mathcal N$.
It is well-defined since $\mathcal N$ annihilates any ground state.
See the discussion below Eq.~\eqref{eq:tildeS}.)
By the local topological order condition, we must have $O \ket \psi = c(O) \ket \psi$
for some number $c(O)$.
If $\pi_a$ are fundamental projectors,
then $\sum_a \pi_a \ket \psi = \ket \psi$ so $\sum_a c(\pi_a) = 1$,
but $\pi_a \pi_b \ket \psi = c(\pi_a)c(\pi_b) \ket \psi = 0$ whenever $a \neq b$.
See Eq.~\eqref{eq:fundamentalprojectors}.
Therefore, one and only one $c(\pi_a)$ is nonzero,
and the corresponding label $a =$ ``1'' represents the distinguished vacuum.
\hfill $\lozenge$
\end{remark}

Kitaev's quantum double model Hamiltonians~\cite{Kitaev2003Fault-tolerant}
satisfy the local topological order condition.
Ref.~\cite[Sec.~3]{FiedlerNaaijkens2014Haag} does not explicitly mention this,
but it is proved that the expectation value of local observables are determined
by Hamiltonian terms that acts nearby,
which is equivalent to the present local topological order condition.
We believe that Levin and Wen's string-net model Hamiltonians~\cite{LevinWen2005String-net}
also satisfy the local topological order condition for the following reason.
The ground state wave function can be viewed as a string condensate
--- a superposition of certain loop configurations.
The local terms in the Hamiltonians are just enough to infer
allowed configurations of the loops, their local deformation rules,
and the relative amplitudes among the deformed loops,
which are enough to determine the local reduced density matrix.

The following is a simple fact implied by the local topological order condition.
We will use it frequently in the proof of Theorem~\ref{thm:invariant}.
\begin{lem}
Let $D$ be a disk and $\Pi_D$ be the projector onto the subspace support of the reduced density operator
on $D$ of the ground state $\ket \psi$. If $h_j$ is a term of the Hamiltonian Eq.~\eqref{eq:local-H} whose support is
contained in $D$, then
\[
 h_j \Pi_D = \Pi_D .
\]
If $P_D$ is the product of all terms of  the Hamiltonian whose support overlap with $D$,
then
\[
 \Pi_D P_D = P_D.
\]
\label{lem:basic-projectors}
\end{lem}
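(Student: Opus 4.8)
The plan is to treat the two assertions separately: the first follows from frustration-freeness alone, and only the second uses the local topological order condition.

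For the first identity I would start from $h_j\ket\psi = \ket\psi$. Set $\rho_D = \mathrm{tr}_{D^c}\ket\psi\bra\psi$, and since $h_j$ acts only on $D$ write $h_j = h_j'\otimes I_{D^c}$ with $h_j'$ a projector on $\mathcal H_D$. Taking the partial trace over $D^c$ of $h_j\ket\psi\bra\psi = \ket\psi\bra\psi$ gives $h_j'\rho_D = \rho_D$. Now if $\ket v$ is an eigenvector of $\rho_D$ with eigenvalue $\lambda\neq0$, then $\lambda h_j'\ket v = h_j'\rho_D\ket v = \rho_D\ket v = \lambda\ket v$, so $h_j'\ket v = \ket v$; hence $h_j'$ restricts to the identity on the subspace support of $\rho_D$, i.e. $h_j'\Pi_D = \Pi_D$, and tensoring back with $I_{D^c}$ yields $h_j\Pi_D = \Pi_D$.

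For the second identity I would first reformulate the local topological order condition: if $\ket\phi$ is a normalized state with $h_j\ket\phi = \ket\phi$ for every $h_j$ whose support meets $D$, then Eqs.~\eqref{eq:local-ground-state} and~\eqref{eq:same-rho} give $\mathrm{tr}_{D^c}\ket\phi\bra\phi = \rho_D$; writing the Schmidt decomposition $\ket\phi = \sum_i\sqrt{p_i}\,\ket{a_i}_D\ket{b_i}_{D^c}$ with all $p_i>0$ and orthonormal Schmidt vectors, this equality reads $\rho_D = \sum_i p_i\ket{a_i}\bra{a_i}$, so each $\ket{a_i}$ lies in the subspace support of $\rho_D$ and therefore $\Pi_D\ket\phi = \ket\phi$. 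Now apply this with $\ket\phi = P_D\ket\chi$ for an arbitrary vector $\ket\chi$: the factors of $P_D$ are mutually commuting projectors, so any $h_j$ overlapping $D$ satisfies $h_jP_D = P_D$ and hence $h_j\ket\phi = \ket\phi$; after normalizing (the case $\ket\phi = 0$ being trivial) we get $\Pi_D\ket\phi = \ket\phi$, that is $\Pi_D P_D\ket\chi = P_D\ket\chi$, and since $\ket\chi$ was arbitrary, $\Pi_D P_D = P_D$.

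I do not anticipate a real obstacle here: this is a bookkeeping lemma whose content is entirely in unpacking the definitions of ``subspace support'' and of $\mathrm{tr}_{D^c}$. The two points to be careful about are that one must use the hypothesis, \emph{implicit in the statement}, that $D$ does not wrap around the whole system, so that the local topological order condition is applicable, and that, because $\Pi_D$ and $P_D$ are supported on different regions, the equation $\Pi_D P_D = P_D$ is to be read as an operator identity on the full system (equivalently on any disk containing the supports of all terms of $P_D$), which is exactly what the argument over arbitrary $\ket\chi$ delivers.
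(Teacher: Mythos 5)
Your proposal is correct and follows essentially the same route as the paper: the first identity comes from frustration-freeness restricted to the Schmidt components (your partial-trace/eigenvector phrasing is an equivalent packaging of the paper's Schmidt-decomposition argument), and the second comes from applying the local topological order condition to $P_D\ket\chi$ for arbitrary $\ket\chi$. Your explicit remarks about the zero-vector case and the implicit hypothesis that $D$ does not wrap the whole system are sensible points that the paper leaves tacit.
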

\begin{proof}
The ground state $\ket \psi$ has the Schmidt decomposition
$
 \ket \psi = \sum_a \lambda_a \ket{\psi_D^a} \ket{\psi_{D^c}^a}  
$
where $\lambda_a > 0$. The projector is
$
 \Pi_D = \sum_a \ket {\psi_D^a}\bra{\psi_D^a}.
$
Since the Hamiltonian is frustration-free, we know
$h_j \ket \psi = \ket \psi$. In terms of the Schmidt decomposition,
\[
\sum_a \lambda_a h_j \ket{\psi_D^a} \ket{\psi_{D^c}^a}  = \sum_a \lambda_a \ket{\psi_D^a} \ket{\psi_{D^c}^a}.
\]
Since $\ket{\psi_{D^c}^a}$ are orthonormal, we must have
\[
 h_j \ket{\psi_D^a} = \ket{\psi_D^a}
\]
for each $a$. Therefore, $h_j \Pi_D = \Pi_D$.
The second assertion can be proved by the local topological order condition.
For any vector $\ket \phi$, $P_D \ket \phi$ has (after normalization)
the reduced density operator on $D$ which is equal to that of $\ket \psi$.
In other words, the Schmidt decomposition of $P_D \ket \phi$ must be
\[
 P_D \ket \phi \propto \sum_a \lambda_a \ket{\psi_D^a} \ket{\phi_{D^c}^a}
\]
for some orthonormal $\ket{\phi_{D^c}^a}$.
The equation $\Pi_D P_D = P_D$ follows since $\ket \phi$ was arbitrary.
\end{proof}

Technically, the conditions of Ref.~\cite{BravyiHastings2011short} are phrased slightly differently than ours.
It turns out that they are equivalent.
``TQO-2''~\cite{BravyiHastings2011short} states that
if $O_D$ is an operator supported on a disk $D$ such that $
O_D \ket \psi = 0 $
for any ground state $\ket \psi$, then $O_D P_D = 0$ where $P_D$ is as in Lemma~\ref{lem:basic-projectors}.
This readily follows from our definition.
If $O_D \ket \psi = 0$, then $O_D$ annihilates any Schmidt component on $D$ of $\ket \psi$,
which means $O_D \Pi_D = 0$, but then $O_D P_D = O_D \Pi_D P_D = 0$ by Lemma~\ref{lem:basic-projectors}.
The converse is contained in Corollary~1 of \cite{BravyiHastings2011short}.

\subsection{Assumption II: Stable logical algebras}
\label{sec:ass-stable-logical-algebra}

\begin{defn}
Given a Hamiltonian Eq.~\eqref{eq:local-H} of interaction range $w$,
an operator $O$ supported on an annulus $A_t$ of thickness $t$ is said to be {\bf $H$-null}
if it can be expressed as
\[
 O = O_1 + \cdots + O_n
\]
where
for each $O_i$ there exists a term $h_{j(i)}$ of $H$ on the annulus $A_{t+w}$ of thickness $t+w$ such that $h_{j(i)} O_i = O_i h_{j(i)} = 0$.
In addition, we define two sets:
\begin{itemize}
 \item $\mathcal A_t$: the set of all operators on the annulus of thickness $t$ that commute with every term of $H$. It is a $C^*$-algebra.
 \item $\mathcal N_t$: the subset of $\mathcal A_t$ consisting of all $H$-null operators. It is a two-sided ideal of $\mathcal A_t$.
\end{itemize}
\label{defn:H-null}
\end{defn}

The definition of $\mathcal A_t$ is the same as the one given in Section~\ref{sec:particle-types},
but that of $\mathcal N_t$ is slightly different ---
previously we have defined $\mathcal N_t$ by the condition that $O$ is annihilated by
the product of $h_a$'s that are supported on the annulus of thickness $t+w$. See Eq.~\eqref{eq:def-nullOP}.
The present definition clearly implies the previous one because the product of $h_{j(i)}$'s will annihilate $O$. Conversely,
\begin{lem}
Let $P$ be a finite product of some terms $h_j$ of the Hamiltonian supported on the annulus of thickness $t+w$.
For an operator $O \in \mathcal A_t$, if $OP = 0$, then $O \in \mathcal N_t$.
\label{lem:alg-decomposition}
\end{lem}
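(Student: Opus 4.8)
Let $P$ be a finite product of terms $h_j$ supported on the annulus $A_{t+w}$. For $O \in \mathcal{A}_t$, if $OP = 0$, then $O \in \mathcal{N}_t$ (i.e. $O$ is $H$-null).

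The plan is a routine ``partition of unity'' manipulation for commuting projectors; there is no real obstacle, only a couple of bookkeeping points to keep straight. Write $P = h_{j_1} h_{j_2} \cdots h_{j_m}$, a finite product of terms of $H$ each supported on the annulus $A_{t+w}$, and suppose $O \in \mathcal A_t$ satisfies $OP = 0$. The first step is simply to observe $O = O(I-P)$, so everything reduces to decomposing $I - P$ in a way that interacts well with the individual terms $h_{j_k}$.

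Next I would invoke the telescoping identity
\[
 I - h_{j_1} h_{j_2} \cdots h_{j_m} \;=\; \sum_{k=1}^m h_{j_1} \cdots h_{j_{k-1}} \bigl(I - h_{j_k}\bigr),
\]
which holds for arbitrary operators (it uses only associativity and the empty-product convention), and set $O_k := O\, h_{j_1} \cdots h_{j_{k-1}} (I - h_{j_k})$, so that $O = O(I-P) = \sum_{k=1}^m O_k$. The remaining work is to check that each $O_k$ is annihilated on both sides by the single term $h_{j_k}$. On the right this is immediate from $h_{j_k}^2 = h_{j_k}$, since $(I - h_{j_k}) h_{j_k} = 0$. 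On the left one uses that $O \in \mathcal A_t$ commutes with every term of $H$ and that the terms of $H$ commute with one another, so $h_{j_k}$ can be moved past $O$ and past $h_{j_1}, \ldots, h_{j_{k-1}}$ and then absorbed into the factor $(I - h_{j_k})$, giving $h_{j_k} O_k = 0$.

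Since each witnessing term $h_{j_k}$ is supported on $A_{t+w}$ by hypothesis, the decomposition $O = O_1 + \cdots + O_m$ exhibits $O$ as $H$-null in the sense of Definition~\ref{defn:H-null}, and combined with $O \in \mathcal A_t$ this gives $O \in \mathcal N_t$. The one point worth flagging is that the summands $O_k$ are a priori supported only on the fattened annulus $A_{t+w}$ rather than on $A_t$; but Definition~\ref{defn:H-null} constrains only the support of the total operator $O$, not of the individual $O_i$, which is precisely why the ``$+w$'' slack is built into that definition, so nothing further is needed.
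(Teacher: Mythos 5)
Your proof is correct and follows essentially the same route as the paper: both use the telescoping identity $I - h_{j_1}\cdots h_{j_m} = \sum_k h_{j_1}\cdots h_{j_{k-1}}(I-h_{j_k})$ and the commutativity of $O$ with the terms of $H$ to produce the required decomposition (the paper merely places $O$ on the other side of the product). Your remark that Definition~\ref{defn:H-null} constrains only the support of $O$ and not of the individual summands is accurate and consistent with how the paper uses the lemma.
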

\begin{proof}
Let $P = h_{a_1} h_{a_2} \cdots h_{a_m}$.
One can rewrite $O = (1-P)O = \sum_{k=1}^m O_k$ where
\[
 O_k = h_{a_1}h_{a_2} \cdots h_{a_{k-1}}(1-h_{a_k}) O .
\]
It is clear that $O_k h_{a_k} = 0$.
\end{proof}

Now our
{\bf stable logical algebra condition} asserts that
when $10w \le t \le t' \le \rann / 10$
the inclusion $\mathcal A_t \hookrightarrow \mathcal A_{t'}$ induces an {\em isomorphism}
\[
\mathcal A_t/ \mathcal N_t \to \mathcal A_{t'} / \mathcal N_{t'}
\]
for the left and right annuli.
(Of course, the constant 10 is an arbitrary choice that is larger than 1.)
Note that the induced map is always well-defined because $\mathcal N_t$ is a subset of $\mathcal N_{t'}$.
That is, any equivalent operators of the quotient algebra $\mathcal A_t / \mathcal N_t$
are mapped to equivalent operators in $\mathcal A_{t'} / \mathcal N_{t'}$.

The induced map being surjective means that for any operator $O_{t'}$ of $\mathcal A_{t'}$
there is an operator $O_t$ equivalent to $O_{t'}$ modulo $\mathcal N_{t'}$
such that $O_t$ is supported on a thinner annulus of thickness $t$.
Recall that the operators of $\mathcal A_t$ are physically the string operators transporting anyons.
The equivalence relation by $\mathcal N_t$ is motivated by the fact that the string operators
should be deformed without affecting its action if the deformation sweeps a region of no anyons.
Thus, the existence of $O_t$ reflects the intuition that any string operator's action 
can always be achieved by acting on a narrow region.

The induced map being injective means that any nontrivial action ($\notin \mathcal N_t$)
of the string operator will still be nontrivial even if one has provided
with additional information that there is no anyon on an enlarged region ($\notin \mathcal N_{t'}$).
This is related to the fact that there are finitely many anyon labels,
and any anyon in the disk enclosed by the annulus
can be moved to a fixed finite region,
which can be chosen to be the center of the disk
without altering the far outside region.

\begin{remark}
\label{remark-nonstable-logical-algebra}
Indeed, the injectivity of the induced map $\mathcal A_t/ \mathcal N_t \to \mathcal A_{t'} / \mathcal N_{t'}$
may be false in a system with infinitely many particle types;
the stable logical algebra condition is not implied by the local topological order condition in general.
As a concrete example, we can consider an infinite stack of toric code layers.
Let us say that the layers are parallel to the $xy$-plane.
This state is three-dimensional,
but the particles can only move within each layer;
there are infinitely many particle types living on different layers.
An analog of the annulus in the three-dimensional space is a sphere with a wall of thickness $t$.
The algebra $\mathcal A_t$ and its ideal $\mathcal N_t$
are similarly defined.
Depending on the position of the layer,
the intersection of the layer and the wall can be either an annulus, a disk, or empty.
An $e$-particle projector $\pi_e$ acting on a layer $Y$
is an element of $\mathcal A_t / \mathcal N_t$,
which is zero if the intersection of $Y$ with the wall is a disk,
but nonzero if it is an annulus.
Suppose the layer $Y$ is close to the north pole,
but is not too close so that $\pi_e$ is nonzero.
As we increase the thickness of the wall,
the intersection of $Y$ with the wall changes its topology from an annulus to a disk.
Therefore, at some thickness $t' > t$, $\pi_e$ becomes zero in $\mathcal A_{t'} / \mathcal N_{t'}$.
The induced map $\mathcal A_t / \mathcal N_t \to \mathcal A_{t'} / \mathcal N_{t'}$ is not injective
in this example.
A similar situation is also found in the cubic code model~\cite{Haah2011Local}.

As far as we know, the only examples where this injectivity fails
exist in three or higher dimensions.
It is an interesting problem whether the stable logical algebra condition follows from the local topological order condition
in two dimensions.
\hfill $\lozenge$
\end{remark}

\subsection{Proof of Theorem~\ref{thm:invariant}}
\label{sec:proof}

We now prove the theorem.
Fix a ground state $\ket \psi$ of a Hamiltonian Eq.~\eqref{eq:local-H}
with local topological order.

\begin{defn}
Given a state $\ket \psi$,
an operator $O$ is said to be {\bf $s$-locally null} near an annulus
if it can be expressed as
\[
O = O_1 + \cdots + O_n,
\]
such that 
for each $O_i$ there exists a disk $D_i$ of radius at most $s$ meeting the annulus such that
\[
\Pi_{D_i} O_i = O_i \Pi_{D_i} = 0
\]
where
$\Pi_{D_i}$ is the projector onto the subspace support of the reduced density operator of $\ket \psi$ on the disk $D_i$.
In addition, we define two sets:
\begin{itemize}
 \item $\mathcal I_t$: the set of all operators $O$ supported on the annulus of thickness $t$ such that
      both $O$ and $O^\dagger$ are $(s,s)$-locally invisible for all $t/16 \le s \le t/4$.
 \item $\mathcal M_t$: the set of all $(t/2)$-locally null operators near the annulus of thickness $t$.
\end{itemize}
\end{defn}

Note that $\mathcal M_t$ \emph{is} a linear space, but $\mathcal I_t$ may \emph{not} be.
$\mathcal M_t$ is not guaranteed to be a subset of $\mathcal I_t$.
Nevertheless, we consider an equivalence relation on $\mathcal I_t$ defined by
\[
O \sim O' \text{ if and only if } O - O' \in \mathcal M_t
\]
for any $O,O' \in \mathcal I_t$.
The relation is reflexive ($O\sim O$),
symmetric ($O \sim O' \Leftrightarrow O' \sim O$),
and transitive ($O \sim O'$ and $ O' \sim O''$ imply $O \sim O''$).
We denote by 
\[
\mathcal I_t/ \mathcal M_t
\]
the set of all the equivalence classes.
$\mathcal I_t$ and $\mathcal M_t$ are defined purely in terms of the state,
while $\mathcal A_t$ and $\mathcal N_t$ depend on the Hamiltonian.

The locally null operators are defined similarly to the $H$-null operators. 
See Eq.~\eqref{eq:def-nullOP} and Lemma~\ref{lem:alg-decomposition}.
In particular, it holds that
any locally null operator makes the twist pairing to be zero.
Indeed, suppose $O$ is locally null, or more specifically
\[
 O \in \mathcal M_t,
\]
where $t$ is much smaller than the radius of the annulus.
Since the local projectors to annihilate $O$ will be supported on a thickened annulus,
we will need a left annulus $A_{2t}$ of thickness $2t$.
Recall that the left annulus and the right annulus intersect at two diamonds, one in the north and the other in the south.
Let $\Omega$ be the region of $A_{2t}$ that contains all points of the left annulus except those in the south diamond.
Similarly, let $\mho$ be the region of $A_{2t}$ that contains all points of the left annulus but those in the north diamond.
Let $\Pi_\Omega$ and $\Pi_\mho$ be, as in Lemma~\ref{lem:basic-projectors},
the projectors onto the subspace support of the reduced density operator of the ground state
on the region $\Omega$ and $\mho$, respectively.
It is easy to verify that
\[
 \Pi_\Omega \Pi_D = \Pi_\Omega, \quad \Pi_E \Pi_\mho = \Pi_\mho
\]
for any disk $D$ contained in $\Omega$ and any disk $E$ in $\mho$ by continued Schmidt decompositions.
(This is equivalent to say that $\ker \rho_{AB} \supseteq \ker \rho_A \otimes \rho_B$ for any bipartite density matrix $\rho_{AB}$.
This follows from $\Pi_A \Pi_B \rho_{AB} \Pi_B \Pi_A = \rho_{AB}$
where $\Pi_{A,B}$ are projectors such that $\ker \rho_{A,B} = \Pi_{A,B}^\perp$.)
Therefore,
\[
 \Pi_\Omega O \Pi_\mho = 0
\]
by the definition of $\mathcal M_t$.
Now,
\begin{align*}
\Pi_\Omega (O \infty O') &= (\Pi_\Omega O) \infty O'\\
(O \infty O') \Pi_\mho &= (O \Pi_\mho ) \infty O'
\end{align*}
for any $O'$ on the right annulus. See Figure~\ref{fig:annuli}.
Since $\Pi_\Omega \ket \psi  = \ket \psi = \Pi_\mho \ket \psi$,
we have 
\begin{equation}
\bra \psi O \infty O' \ket \psi = \bra \psi \Pi_\Omega (O \infty O') \Pi_\mho \ket \psi = 0.
\label{eq:locally-null-zero-twist-pairing}
\end{equation}
We conclude that any equivalent operators of $\mathcal I_t$ give the same value for the twist pairing.

\begin{lem}
An operator $O$ that commutes with every term of the Hamiltonian is
$(r,w)$-locally invisible with respect to the ground state $\ket \psi$ for any $r > 0$, where $w$ is the interaction range.
\label{lem:commuting-invisible}
\end{lem}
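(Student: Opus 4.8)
The plan is to reduce the statement to Lemma~\ref{lem:basic-projectors} together with the local topological order condition. Fix an arbitrary disk $A$ of radius $r$ and let $B$ be its $w$-ball, i.e.\ the concentric disk of radius $r+w$; showing local invisibility at this pair $(A,B)$ for every such $A$ is exactly what $(r,w)$-local invisibility requires. Let $\ket\phi$ be any state with $\mathrm{tr}_{B^c}\ket\phi\bra\phi=\mathrm{tr}_{B^c}\ket\psi\bra\psi$, and write $\Pi_B$ for the projector onto the subspace support of this common reduced density operator. The Schmidt vectors of $\ket\phi$ on $B$ span a subspace of the range of $\mathrm{tr}_{B^c}\ket\phi\bra\phi$, so $\Pi_B\ket\phi=\ket\phi$. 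The goal is to prove $\mathrm{tr}_{A^c}[O\ket\phi\bra\phi O^\dagger]\propto\mathrm{tr}_{A^c}\ket\psi\bra\psi$.

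Next I would introduce $P$, the product of all terms $h_j$ of $H$ whose support meets $A$. Each such term is supported on a disk of diameter $w$ intersecting $A$, hence is supported inside $B$; so by the first part of Lemma~\ref{lem:basic-projectors} we have $h_j\Pi_B=\Pi_B$, and therefore $h_j\ket\phi=h_j\Pi_B\ket\phi=\Pi_B\ket\phi=\ket\phi$ for every factor $h_j$ of $P$. Consequently $P\ket\phi=\ket\phi$. Since $O$ commutes with every term of $H$, it commutes with $P$, and thus
\[
 O\ket\phi=OP\ket\phi=PO\ket\phi .
\]

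Finally I would invoke the local topological order condition for the disk $A$. Because the factors of $P$ are mutually commuting idempotents, any $h_j$ meeting $A$ satisfies $h_jP=P$, so the vector $O\ket\phi=PO\ket\phi$ is fixed by every Hamiltonian term that meets $A$. By Eqs.~\eqref{eq:local-ground-state}--\eqref{eq:same-rho}, the reduced density operator of $O\ket\phi$ on $A$ therefore equals, up to the overall normalization $\|O\ket\phi\|^2$, the reduced density operator of $\ket\psi$ on $A$; this is precisely the desired proportionality, and it holds for every radius-$r$ disk $A$. This is the same mechanism already used in the proof of the second part of Lemma~\ref{lem:basic-projectors}, so I do not expect a genuine obstacle. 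The only point demanding care is the geometric bookkeeping — that a Hamiltonian term meeting the radius-$r$ disk $A$ is automatically supported inside its $w$-ball $B$ — since this is exactly what allows the agreement of $\ket\phi$ and $\ket\psi$ on $B$ to annihilate $P$ before $O$ is slid past it.
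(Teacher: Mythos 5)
Your proof is correct and follows essentially the same route as the paper's: use Lemma~\ref{lem:basic-projectors} to conclude that every Hamiltonian term meeting $A$ fixes $\ket\phi$, slide $O$ past those terms by commutativity, and invoke the local topological order condition on $A$. The explicit introduction of the product $P$ is a harmless elaboration of the paper's one-line commutation step.
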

\begin{proof}
Pick any concentric disks $A,B$ of radius $r, r+w$, respectively.
Let $\ket \phi$ be a state vector with the same reduced density operator
on $B$ as $\ket \psi$.
By Lemma~\ref{lem:basic-projectors},
$h_j \ket \phi = \ket \phi$ whenever $h_j$ is supported inside $B$.
These include all $h_j$ whose support meet $A$.
By assumption,
we have $h_j O \ket \phi = O h_j \ket \phi = O \ket \phi$.
The local topological order condition implies that these equations determine
the reduced density matrix $\rho_A$ of $O\ket \phi$ on $A$,
and $\rho_A$ is the same as that of $\ket \psi$.
\end{proof}

\begin{lem}
Assume $t > 16w$.
$\mathcal A_t$ is a subset of $\mathcal I_{t'}$ for $t' \geq t$, 
and the inclusion map induces a well-defined map $\mathcal A_t / \mathcal N_t \to \mathcal I_{t'}/ \mathcal M_{t'}$.
\label{lem:ANtoIM}
\end{lem}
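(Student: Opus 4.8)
The plan is to verify the two set inclusions $\mathcal A_t \subseteq \mathcal I_{t'}$ and $\mathcal N_t \subseteq \mathcal M_{t'}$; once these hold, the map sending the class of $O \in \mathcal A_t$ modulo $\mathcal N_t$ to the class of $O$, now regarded as an element of $\mathcal I_{t'}$, modulo $\mathcal M_{t'}$, is automatically well defined: two representatives differ by some $O'' \in \mathcal N_t \subseteq \mathcal M_{t'}$, and such a difference lies in $\mathcal A_t \subseteq \mathcal I_{t'}$, so the equivalence relation on $\mathcal I_{t'}$ (which is a relation on a set that need not be a linear space) does apply to it and identifies the two classes.

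First I would establish $\mathcal A_t \subseteq \mathcal I_{t'}$. Let $O \in \mathcal A_t$. Its support is the annulus of thickness $t \le t'$, and both $O$ and $O^\dagger$ commute with every term $h_j$ of $H$, since the $h_j$ are hermitian. By Lemma~\ref{lem:commuting-invisible}, any operator commuting with all $h_j$ is $(s,w)$-locally invisible for every $s>0$. I would then note that $(s,w)$-local invisibility entails $(s,s)$-local invisibility whenever $s \ge w$: a state agreeing with $\ket\psi$ on the disk of radius $2s$ a fortiori agrees with it on the concentric disk of radius $s+w$, so the defining implication of $(s,w)$-local invisibility on the disk of radius $s$ is in force. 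Since $t' \ge t > 16w$, every $s$ with $t'/16 \le s \le t'/4$ satisfies $s > w$, so $O$ and $O^\dagger$ are $(s,s)$-locally invisible on that whole range; hence $O \in \mathcal I_{t'}$.

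Next I would establish $\mathcal N_t \subseteq \mathcal M_{t'}$. Take an $H$-null operator with a witnessing decomposition $O = O_1 + \cdots + O_n$ where, for each $i$, there is a term $h_{j(i)}$ supported in $A_{t+w}$ with $h_{j(i)} O_i = O_i h_{j(i)} = 0$; I will reuse this very decomposition as a certificate of local nullity. For each $i$ pick a disk $D_i$ containing $\mathrm{supp}(h_{j(i)})$. Lemma~\ref{lem:basic-projectors} gives $h_{j(i)} \Pi_{D_i} = \Pi_{D_i}$, hence $(1-h_{j(i)})\Pi_{D_i} = \Pi_{D_i}(1-h_{j(i)}) = 0$; combined with $h_{j(i)} O_i = O_i h_{j(i)} = 0$ this yields $\Pi_{D_i} O_i = \Pi_{D_i}(1-h_{j(i)}) O_i = 0$ and $O_i \Pi_{D_i} = O_i (1-h_{j(i)}) \Pi_{D_i} = 0$. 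It then remains to choose each $D_i$ of radius at most $t'/2$ while still meeting the annulus of thickness $t'$. Here $\mathrm{supp}(h_{j(i)}) \subseteq A_{t+w}$ and $A_t \subseteq A_{t'}$, so every point of $\mathrm{supp}(h_{j(i)})$ lies within distance at most $w$ of $\overline{A_{t'}}$; taking $D_i$ to be a disk of radius $2w$ centred at a point of $\mathrm{supp}(h_{j(i)})$ then both contains that support (whose diameter is at most $w$) and touches $A_{t'}$, and $2w < t'/2$ because $t' \ge t > 16w$. Thus $O$ is $(t'/2)$-locally null near the annulus of thickness $t'$, i.e. $O \in \mathcal M_{t'}$.

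The conceptual content is therefore immediate from Lemmas~\ref{lem:commuting-invisible} and~\ref{lem:basic-projectors}; the step that demands genuine care — and the main obstacle — is the quantitative bookkeeping of radii and distances in the last paragraph, ensuring that the disks $D_i$ simultaneously contain the supports of the offending terms, have radius at most $t'/2$, and still meet $A_{t'}$, uniformly for all $t' \ge t$. This is exactly what the hypothesis $t > 16w$, together with the factor-of-$16$ slack built into the definition of $\mathcal I_t$, is there to buy.
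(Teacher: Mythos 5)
Your proof is correct and follows essentially the same route as the paper's: the first inclusion via Lemma~\ref{lem:commuting-invisible} (with the monotonicity of local invisibility in the second argument, which the paper leaves implicit), and well-definedness by reusing the $H$-null decomposition as a certificate of local nullity together with $h_{j(i)}\Pi_{D_i}=\Pi_{D_i}$ from Lemma~\ref{lem:basic-projectors}. Your extra care with the disk radii (radius $2w$ rather than the paper's $t'/2$) only makes explicit what the paper asserts with ``we can certainly choose.''
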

\begin{proof}
Lemma~\ref{lem:commuting-invisible} says that $\mathcal A_t \subseteq \mathcal I_t$.
(The constant 16 is because we have defined members of $\mathcal I_t$ 
to be $(t/16,t/16)$-locally invisible.)
We have to show that the induced map is well-defined,
i.e., that equivalent operators of $\mathcal A_t / \mathcal N_t$ 
are mapped to the same equivalence class of $\mathcal I_{t'}/ \mathcal M_{t'}$.
Suppose $O - O' \in \mathcal N_t$.
There exists a decomposition
\[
 O - O' = O_1 + \cdots + O_n,
\]
such that for each $O_i$ there is $h_{j(i)}$ on the annulus of thickness $t+w$ such that $O_i h_{j(i)} = h_{j(i)} O_i = 0$.
We can certainly choose a disk $D_i$ of radius $t'/2$ that contains the support $h_{j(i)}$ and meets the annulus.
Let $\Pi_{D_i}$ be as in Lemma~\ref{lem:basic-projectors}.
Since $h_{j(i)} \Pi_{D_i} = \Pi_{D_i}$, we see $O_i \Pi_{D_i} = \Pi_{D_i} O_i = 0$.
This means that $O \sim O'$.
\end{proof}

Conversely, we can map locally invisible operators into the logical algebra.
Consider the symmetrization (superoperator) $\phi_j$ for each $h_j$ defined by
\begin{equation}
 \phi_j : O \mapsto \frac12 ( O + (2h_j-1) O (2h_j -1) ),
\label{eq:sym-single}
\end{equation}
and the composition of all $\phi_j$'s
\begin{equation}
 \phi = \prod_j \phi_j.
\label{eq:sym-composition}
\end{equation}
Since $h_j$'s are commuting with one another, the order of the composition does not matter.
If $O$ commutes with $h_j$, then $\phi_j(O) = O$.
Therefore, applying $\phi$ to any operator enlarges its support only by $w$, the interaction range.
It will be used below that for any $h_j$,
\begin{equation}
 (2h_j-1)[ O - \phi_j(O) ](2h_j -1) = - [O- \phi_j(O)].
 \label{eq:sOphiOs}
\end{equation}

\begin{lem}
The symmetrization map $\phi$ of Eq.~\eqref{eq:sym-composition} 
defines a map from $\mathcal I_t$ to $\mathcal A_{t+w}$,
and for any $S, T \in \mathcal I_t$ we have
\[
 S(T - \phi(T)) \in \mathcal M_t, \quad (T-\phi(T))S \in \mathcal M_t.
\]
\label{lem:symmetrization}
\end{lem}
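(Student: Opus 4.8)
The plan is to prove the two assertions of the lemma separately, treating $\phi=\prod_j\phi_j$ as a composition of pairwise commuting, self-adjoint, unital idempotent superoperators: each $\phi_j$ from Eq.~\eqref{eq:sym-single} has the form $\phi_j(O)=\tfrac12(O+s_jOs_j)$ with $s_j=2h_j-1$ a self-adjoint unitary, its image is exactly the set of operators commuting with $h_j$, and $\phi_j$ restricts to the identity on that image. Since the $s_j$ commute, the $\phi_j$ commute, so $\phi$ is well defined (as Eq.~\eqref{eq:sym-composition} notes) and is the conditional expectation onto the operators commuting with every term of $H$.

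For the first assertion I would first check that $\phi(T)$ commutes with every $h_k$. Apply the $\phi_j$ one at a time; since $s_jOs_j=O$ and $[s_j,s_k]=0$, applying a further $\phi_j$ to an operator already invariant under $s_k$ leaves it invariant under $s_k$, so commutation with terms already treated is never lost, while $\phi_k(\cdot)$ always lands in the commutant of $h_k$. Hence $\phi(T)$ commutes with all terms. For the support, $\phi_j(O)-O=\tfrac12(s_jOs_j-O)$ vanishes unless $\mathrm{supp}(h_j)$ meets $\mathrm{supp}(O)$ and otherwise enlarges the support only by $\mathrm{supp}(h_j)$; this is the observation made just before the lemma that $\phi$ enlarges supports by at most the interaction range $w$, so $\phi(T)$ is supported on the annulus of thickness $t+w$. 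Thus $\phi(T)\in\mathcal A_{t+w}$, and by Lemma~\ref{lem:commuting-invisible} (cf.\ the proof of Lemma~\ref{lem:ANtoIM}) it moreover lies in $\mathcal I_{t+w}$.

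For the second assertion I would telescope. Fix an enumeration $h_1,\dots,h_N$ of the finitely many terms that are ever activated, write $\phi=\phi_N\circ\cdots\circ\phi_1$, set $T_0=T$ and $T_k=\phi_k(T_{k-1})$, so $\phi(T)=T_N$ and $T-\phi(T)=\sum_{k=1}^N Y_k$ with $Y_k:=(I-\phi_k)(T_{k-1})$, a finite sum. By Eq.~\eqref{eq:sOphiOs}, $s_kY_ks_k=-Y_k$, which is equivalent to the off-diagonal form $Y_k=h_kY_k(1-h_k)+(1-h_k)Y_kh_k$; in particular $h_kY_kh_k=0$. Choose a disk $D_k$ of radius at most $t/2$ that contains $\mathrm{supp}(h_k)$ and meets the annulus; then $\Pi_{D_k}h_k=h_k\Pi_{D_k}=\Pi_{D_k}$ by Lemma~\ref{lem:basic-projectors}, so the first summand of $Y_k$ is annihilated on the right by $\Pi_{D_k}$ and the second on the left. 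The place where the hypotheses $S,S^\dagger\in\mathcal I_t$ must be used is the upgrade of these one-sided annihilations to genuine $\mathcal M_t$-membership: multiplying by $S$ on the left preserves the right-annihilation of the first summand, and to kill the remaining side I would commute $\Pi_{D_k}$ past $S$ using the operator identities $S\Pi_B=\Pi_AS\Pi_B$ and $\Pi_BS=\Pi_BS\Pi_A$, which hold for concentric disks $A\subseteq B$ of radii $s$ and $2s$ with $t/16\le s\le t/4$ and follow by unwinding the definition of $(s,s)$-local invisibility for $S$ and $S^\dagger$, after enlarging $D_k$ to such a $B$ if needed; the adjoint argument with $T^\dagger,S^\dagger\in\mathcal I_t$ handles $(T-\phi(T))S$. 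Summing over $k$ then gives $S(T-\phi(T))\in\mathcal M_t$ and $(T-\phi(T))S\in\mathcal M_t$.

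The algebra of the $\phi_j$ and the support bookkeeping are routine; the real obstacle I expect is the last step --- converting the one-sided range/kernel annihilations forced by the off-diagonal structure of $Y_k$ into two-sided annihilations by the support projection of a disk that still has radius at most $t/2$ and still meets the annulus. This is exactly what forces the auxiliary operator $S$ rather than $T-\phi(T)$ by itself, makes essential use of the local-invisibility content of $\mathcal I_t$ beyond mere commutation, and pins down the numerical relations ($16w<t$ and $t/16\le s\le t/4$) among the radii.
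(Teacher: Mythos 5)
Your first part (that $\phi$ maps into $\mathcal A_{t+w}$) and your telescoping decomposition $T-\phi(T)=\sum_k Y_k$ with $Y_k=(1-\phi_k)(T_{k-1})$ match the paper. The gap is in the second part: the only property of $T$ you actually use is the algebraic identity $s_kY_ks_k=-Y_k$, i.e.\ the off-diagonal form $Y_k=h_kY_k(1-h_k)+(1-h_k)Y_kh_k$, which holds for an \emph{arbitrary} operator $T$ on the annulus. But the conclusion is false for arbitrary $T$ (take $S=\mathrm{id}$ and $T$ a local operator that creates an excitation: $T-\phi_k(T)$ is off-diagonal with respect to $h_k$, yet it is only annihilated by $\Pi_{D_k}$ on one side per summand, not two-sidedly, so it need not lie in $\mathcal M_t$). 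Concretely, your proposed upgrade does not close: writing the first summand's missing side as $\Pi_{D_k}S\,h_kY_k(1-h_k)$ and commuting via $\Pi_{D_k}S=\Pi_{D_k}S\Pi_A$ with $\mathrm{supp}(h_k)\subseteq A$ only yields $\Pi_{D_k}S\Pi_A\,Y_k(1-h_k)$, and the off-diagonal structure gives no reason for $\Pi_A Y_k(1-h_k)$ to vanish (it only gives $\Pi_A Y_k\Pi_A=0$). The local invisibility of $S$ alone cannot manufacture the second annihilation.

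The missing ingredient is the local invisibility of $T$ itself, combined with the local topological order condition. The paper forms $Q_k$, the product of all Hamiltonian terms supported in a small disk around $\mathrm{supp}(h_k)$; LTQO forces $Q_k\ket\phi$ to have the exact ground-state reduced density matrix on a slightly smaller disk, and then $(t/16,t/16)$-local invisibility of $T$ gives the operator identity $h_kTQ_k=TQ_k$ (Eq.~\eqref{eq:hTQeqTQ}). Since the $h_i$ commute, $Q_k$ and the resulting $s_k$ can be pushed through all the nested symmetrizations to sit next to $T$ and then pulled back out, whence $SY_k\Pi_{D_k}=Ss_kY_ks_k\Pi_{D_k}=-SY_k\Pi_{D_k}=0$ via Eq.~\eqref{eq:sOphiOs} and $Q_k\Pi_{D_k}=\Pi_{D_k}$ (Lemma~\ref{lem:basic-projectors}); this is a genuinely stronger statement than one-sided annihilation of each off-diagonal block. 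The invisibility of $S$ enters only to prove the companion identity $S\Pi_{D_k}=Q_kS\Pi_{D_k}$ needed for $Y_kS\Pi_{D_k}=0$, and the adjoint statements (using $T^\dagger,S^\dagger\in\mathcal I_t$) supply the left annihilations. A secondary point: your identity $S\Pi_B=\Pi_AS\Pi_B$ does not follow by merely ``unwinding'' the definition of local invisibility, since that definition constrains $S\ket\phi$ only for states $\ket\phi$ whose reduced density matrix on $B$ equals $\rho_B^\psi$ \emph{exactly}; to reach all of $\mathrm{range}(\Pi_B)$ you would need a span-of-purifications argument, whereas the paper sidesteps this by using LTQO to produce such states as $Q'_k\ket\phi$.
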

Setting $S=\mathrm{id}$, we see that any locally invisible operator can be 
``deformed'' to an operator that commutes with every term in the Hamiltonian
by enlarging its support slightly.
\begin{proof}
That $\phi : \mathcal I_t \to \mathcal A_{t+w}$ is already shown.
For clarity of notation,
assume that $h_j = h_1, \ldots, h_m$ are all the terms of the Hamiltonian
whose support overlap the annulus $A_t$ on which $O$ is supported.
So, $\phi(O) = (\phi_m \phi_{m-1} \cdots \phi_1)(O)$.

Consider, for each $h_j$ that meets the annulus $A_t$,
concentric disks $D_{j}^r$ of radius $r$,
with their center inside the support of $h_j$.
We will use the disks of radius $r = t/16, 2t/16, \ldots, 8t/16$.

Let $Q_j$ be the product of all terms $h_i$ of the Hamiltonian whose support is contained in $D_j^{3t/16}$.
The local topological order condition implies that for any state $\ket \phi$, the state $Q_j \ket \phi$
has the same reduced density matrix on $D_j^{2t/16}$ as the ground state.
Since $T$ is $(t/16,t/16)$-locally invisible, the reduced density matrix on $D_j^{t/16}$ of $T Q_j \ket \phi$
is the same as that of the ground state. 
In particular, $h_j T Q_j \ket \phi = T Q_j \ket \phi $
for any state $\ket \phi$, or equivalently,
\begin{equation}
 h_j T Q_j = T Q_j .
 \label{eq:hTQeqTQ}
\end{equation}

The disk $D_j = D_j^{t/2}$ certainly meets the annulus $A_t$.
By Lemma~\ref{lem:basic-projectors}, we have
\begin{equation}
 Q_j \Pi_{D_j} = \Pi_{D_j}.
 \label{eq:QPieqPi}
\end{equation}
Now, let $\phi_I$ be any composition of $\phi_i$'s,
and $s_j = 2h_j -1$.
\begin{align*}
 &S [ \phi_j \phi_I(T) - \phi_I(T) ] \Pi_{D_j} & \\
& = S [ \phi_j \phi_I(T) - \phi_I(T) ] Q_j s_j \Pi_{D_j} & \text{Eq.~\eqref{eq:QPieqPi}} \\
& =  S[ \phi_j \phi_I(T Q_j) - \phi_I(T Q_j) ] s_j \Pi_{D_j} &\text{$h_j$'s are commuting}\\
& =  S[ \phi_j \phi_I(s_j T Q_j) - \phi_I(s_j T Q_j) ] s_j \Pi_{D_j} & \text{Eq.~\eqref{eq:hTQeqTQ}}\\
& =  Ss_j [ \phi_j \phi_I(T) - \phi_I(T) ] Q_j s_j \Pi_{D_j} &\text{$h_j$'s are commuting}\\
& =  Ss_j [ \phi_j \phi_I(T) - \phi_I(T) ] s_j \Pi_{D_j} & \text{Eq.~\eqref{eq:QPieqPi}} \\
&= -  S[ \phi_j \phi_I(T) - \phi_I(T) ] \Pi_{D_j} & \text{Eq.~\eqref{eq:sOphiOs}}\\
& = 0
\end{align*}

Next, we show $[ \phi_j \phi_I(T) - \phi_I(T) ] S \Pi_{D_j} = 0$ by a similar calculation.
Let $Q_j'$ be the product of all terms $h_i$ of the Hamiltonian whose support is contained in $D_j^{t/2}$.
Since $S$ is $(3t/16, 3t/16)$-locally invisible and $Q_j'$ determines the reduced density matrix
on $D_j^{6t/16}$, we have
\begin{equation}
S \Pi_{D_j} = S Q_j' \Pi_{D_j} = Q_j S Q_j' \Pi_{D_j} = Q_j S \Pi_{D_j}.
\label{eq:SPieqQSPi}
\end{equation}
With Eq.~\eqref{eq:SPieqQSPi} in place of Eq.~\eqref{eq:QPieqPi},
one can similarly show that
\begin{align*}
& [ \phi_j \phi_I(T) - \phi_I(T) ] S \Pi_{D_j} = 0.
\end{align*}
Since $S^\dagger$ and $T^\dagger$ are also locally-invisible,
the hermitian conjugate of the above computation is valid:
\begin{align*}
\Pi_{D_j} [ \phi_j \phi_I(T) - \phi_I(T) ] S = 0 , \quad \quad
\Pi_{D_j} S [ \phi_j \phi_I(T) - \phi_I(T) ] = 0.
\end{align*}
We have proved that
\begin{align*}
  [\phi_j \phi_I(O) - \phi_I(O)] S  \in \mathcal M_t, \quad \quad
  S [\phi_j \phi_I(O) - \phi_I(O)]  \in \mathcal M_t.
\end{align*}
Finally, using
\begin{equation}
\phi(T) - T = \sum_{j=1}^{m} \phi_{j}\phi_{j-1}\cdots \phi_1 (T) - \phi_{j-1}\phi_{j-2}\cdots \phi_1 (T)
\end{equation}
we see that
\[
 S (\phi(T) - T) \in \mathcal M_t, \quad \quad (\phi(T) - T) S \in \mathcal M_t.
\]
\end{proof}

\begin{lem}
Assume $3w < t$.
Suppose the canonical map $\mathcal A_{t+w}/ \mathcal N_{t+w} \to \mathcal A_{3t} / \mathcal N_{3t}$ 
induced by the inclusion $\mathcal A_{t+w} \hookrightarrow \mathcal A_{3t}$ is injective.
Then, the induced map $\bar \phi : \mathcal I_t / \mathcal M_t \to \mathcal A_{t+w} / \mathcal N_{t+w}$
by the symmetrization map $\phi$ of Eq.~\eqref{eq:sym-composition} is well-defined and injective.
\label{lem:IMtoAN}
\end{lem}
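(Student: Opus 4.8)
The plan is to treat well-definedness and injectivity of $\bar\phi$ separately, using three things already in hand: that $\phi$ sends $\mathcal{I}_t$ into $\mathcal{A}_{t+w}$ and that $S(T-\phi(T))$ and $(T-\phi(T))S$ lie in $\mathcal{M}_t$ for all $S,T\in\mathcal{I}_t$ (Lemma~\ref{lem:symmetrization}); the null-operator characterization via annihilation by a product of Hamiltonian terms (Lemma~\ref{lem:alg-decomposition}); and the projector identities $\Pi_D P_D=P_D$, $h_j\Pi_D=\Pi_D$ (Lemma~\ref{lem:basic-projectors}). Two elementary identities for the single-term symmetrizers, both immediate from $s_j=2h_j-1$ and $s_jh_j=h_j$, are used repeatedly: if $h_j$ is one of the commuting factors of a product $P$ of Hamiltonian terms then $s_jP=P$, hence $\phi_j(X)P=h_jXP$; and if $\mathrm{supp}(h_j)\cap D=\emptyset$ then $[s_j,\Pi_D]=0$, hence $\phi_j(X)\Pi_D=\phi_j(X\Pi_D)$.

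\emph{Well-definedness.} Let $O,O'\in\mathcal{I}_t$ with $N:=O-O'\in\mathcal{M}_t$; by linearity of $\phi$ it is enough to show $\phi(N)\in\mathcal{N}_{t+w}$. Write $N=\sum_i N_i$ with $\Pi_{D_i}N_i=N_i\Pi_{D_i}=0$ for disks $D_i$ of radius $\le t/2$ meeting the annulus, and let $P_i$ be the product of all Hamiltonian terms whose support meets $D_i$. Split $\phi=\phi'_i\phi''_i$, where $\phi''_i$ is the composite of the symmetrizers of the terms disjoint from $D_i$ and $\phi'_i$ the composite of the remaining ones. The second identity above gives $\phi''_i(N_i)\Pi_{D_i}=\phi''_i(N_i\Pi_{D_i})=0$, and right-multiplying by $P_i$ and using $\Pi_{D_i}P_i=P_i$ yields $\phi''_i(N_i)P_i=0$; applying $\phi'_i$ and using the first identity (which replaces each $\phi_j(\,\cdot\,)P_i$ by $h_j(\,\cdot\,)P_i$) gives $\phi(N_i)P_i=0$, and the mirror computation gives $P_i\phi(N_i)=0$. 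Now $\phi(N_i)$ commutes with every Hamiltonian term, and since a radius-$t/2$ disk meeting the thickness-$t$ annulus, fattened by $w$, stays inside the thickness-$3t$ annulus (as $w<t$), both $\phi(N_i)$ and the factors of $P_i$ are supported there; hence the telescoping $\phi(N_i)=\phi(N_i)(1-P_i)=\sum_k\phi(N_i)h_{a_1}\cdots h_{a_{k-1}}(1-h_{a_k})$ of Lemma~\ref{lem:alg-decomposition} exhibits $\phi(N_i)\in\mathcal{N}_{3t}$, so $\phi(N)\in\mathcal{N}_{3t}$. As $\phi(N)\in\mathcal{A}_{t+w}$ too, the assumed injectivity of $\mathcal{A}_{t+w}/\mathcal{N}_{t+w}\to\mathcal{A}_{3t}/\mathcal{N}_{3t}$ forces $\phi(N)\in\mathcal{N}_{t+w}$, i.e.\ $\bar\phi([O])=\bar\phi([O'])$.

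\emph{Injectivity.} It suffices to establish (a) $\mathcal{N}_{t+w}\subseteq\mathcal{M}_t$ and (b) $O-\phi(O)\in\mathcal{M}_t$ for every $O\in\mathcal{I}_t$: then, if $\bar\phi([O])=\bar\phi([O'])$, i.e.\ $\phi(O)-\phi(O')\in\mathcal{N}_{t+w}$, combining (a) with (b) applied to $O$ and to $O'$ and using that $\mathcal{M}_t$ is a linear space gives $O-O'\in\mathcal{M}_t$, i.e.\ $[O]=[O']$. Statement (b) is Lemma~\ref{lem:symmetrization} with $S=\mathrm{id}$. For (a), given $M\in\mathcal{N}_{t+w}$ write $M=\sum_i M_i$ with $h_{j(i)}M_i=M_i h_{j(i)}=0$ for Hamiltonian terms $h_{j(i)}$ supported on the thickness-$(t+2w)$ annulus; since $3w<t$, $\mathrm{supp}(h_{j(i)})$ lies within distance $\le t/2$ of the thickness-$t$ annulus, so it can be enclosed in a disk $D$ of radius $\le t/2$ meeting that annulus, and then $h_{j(i)}\Pi_D=\Pi_D$ gives $M_i\Pi_D=\Pi_D M_i=0$; thus $M$ is $(t/2)$-locally null near the annulus, i.e.\ $M\in\mathcal{M}_t$.

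The algebraic manipulations are routine; the care is all geometric — one must check that each disk introduced genuinely meets the prescribed annulus while keeping its radius at most the allotted fraction of $t$, and, more essentially, one must notice that $\phi$ applied to a locally null operator can be certified null only on a substantially fattened annulus (thickness $\sim 3t$). Repairing that loss of control is precisely the role of the hypothesis that $\mathcal{A}_{t+w}/\mathcal{N}_{t+w}\hookrightarrow\mathcal{A}_{3t}/\mathcal{N}_{3t}$ is injective; note that, somewhat against the suggestion of the name, this hypothesis is used to prove that $\bar\phi$ is \emph{well-defined}, not that it is injective.
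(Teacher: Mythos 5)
Your proof is correct and follows essentially the same route as the paper: for well-definedness you show $\phi$ annihilates each locally-null piece against the product $P_{D_i}$ of terms meeting $D_i$ (hence $\phi(O)-\phi(O')\in\mathcal N_{3t}$) and then use the hypothesized injectivity of $\mathcal A_{t+w}/\mathcal N_{t+w}\to\mathcal A_{3t}/\mathcal N_{3t}$ to descend to $\mathcal N_{t+w}$, and for injectivity you combine $\mathcal N_{t+w}\subseteq\mathcal M_t$ with Lemma~\ref{lem:symmetrization} at $S=\mathrm{id}$, exactly as in the text. The only differences are cosmetic --- the paper gets $\phi(O_i)P_{D_i}=\phi(O_iP_{D_i})=\phi(O_i\Pi_{D_i}P_{D_i})=0$ in one step from the commutativity of $P_{D_i}$ with every $s_j$, rather than splitting $\phi$ into the factors meeting and missing $D_i$ --- and your closing observation that the ``injectivity'' hypothesis is spent on well-definedness is accurate.
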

\begin{proof}
We first have to show that $\bar \phi$ is well-defined.
Suppose $O \sim O'$ for $O,O' \in \mathcal I_t$, i.e.,
\[
 O - O' = O_1 + \cdots + O_n
\]
where
for each $O_i$ there exists a disk $D_i$ of radius $\le t/2$ meeting the annulus of thickness $t$
such that $\Pi_{D_i} O_i = O_i \Pi_{D_i} = 0$.
Here, $\Pi_{D_i}$ is as in Lemma~\ref{lem:basic-projectors}.
Collect all $h_j$ that meet $D_i$ and let $P_{D_i}$ be the product of those.
By Lemma~\ref{lem:basic-projectors}, we have $\Pi_{D_i} P_{D_i} = P_{D_i}$.
Note that $P_{D_i}$ is supported on the annulus of thickness $3t$.

Since $\phi$ is linear,
\[
 \phi(O) - \phi(O') = \phi(O_1) + \cdots + \phi(O_n) .
\]
Since $h_j$'s are commuting,
\[
 \phi(O_i) P_{D_i} = \phi( O_i P_{D_i} ) = \phi( O_i \Pi_{D_i} P_{D_i} ) = 0.
\]
Thus, $\phi(O) - \phi(O') \in \mathcal N_{3t}$ by Lemma~\ref{lem:alg-decomposition}.
But, we already know $\phi(O) - \phi(O') \in \mathcal A_{t+w}$.
Since the kernel of the map
\[
\mathcal A_{t+w} \to \mathcal A_{3t} / \mathcal N_{3t}
\]
is $\mathcal A_{t+w} \cap \mathcal N_{3t} = \mathcal N_{t+w}$ by assumption,
it follows that $\phi(O) - \phi(O') \in \mathcal N_{t+w}$.
Therefore, $\bar \phi$ is well-defined.

To show $\bar \phi$ is injective, suppose $\phi(O) - \phi(O') \in \mathcal N_{t+w}$.
We have to show that $O- O' \in \mathcal M_t$.
By the argument in the proof of Lemma~\ref{lem:ANtoIM},
we see that $\mathcal N_{t+w} \subseteq \mathcal M_t$.
It remains to show $O - \phi(O) \in \mathcal M_t$ and $O' - \phi(O') \in \mathcal M_t$,
but this is already shown in Lemma~\ref{lem:symmetrization}.
The injectivity of the map $\bar \phi$ is proved.
\end{proof}

\begin{cor}
Suppose $\mathcal A_t / \mathcal N_t$ for various $20w < t < \rann / 20$
are all isomorphic by the inclusions $\mathcal A_{t_1} \hookrightarrow \mathcal A_{t_2}$ whenever $t_1 \le t_2$,
i.e., assume the stable logical algebra condition.
Then, the inclusion $\mathcal A_t \hookrightarrow \mathcal I_t$ induces a bijection
\[
 \mathcal A_t / \mathcal N_t \cong \mathcal I_t / \mathcal M_t.
\]
for $ 20 w < t < \rann / 60$.
\label{cor:bijection}
\end{cor}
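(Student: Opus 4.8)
The plan is to realize the claimed bijection as the composite of the two maps already produced in Lemmas~\ref{lem:ANtoIM} and~\ref{lem:IMtoAN}, and then to run a short diagram chase, invoking the stable logical algebra condition at two points: once to discharge the injectivity hypothesis of Lemma~\ref{lem:IMtoAN}, and once to identify the round-trip composite with a map that the condition already asserts to be an isomorphism.

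First I would check that, for $20w < t < \rann/60$, every numerical hypothesis of the auxiliary lemmas is satisfied. Here $t > 16w$, so Lemma~\ref{lem:ANtoIM} applies with $t' = t$; $t > 3w$ and $w \le 2t$, so the inclusion $\mathcal{A}_{t+w} \hookrightarrow \mathcal{A}_{3t}$ makes sense and Lemma~\ref{lem:IMtoAN} applies. Moreover $t + w > 20w$ and $3t < \rann/20$, and since $\rann > 60t > 1200w$ one also has $t < \rann/20 - w$; hence all of $t$, $t+w$, $3t$ lie in the window $(20w,\rann/20)$ on which the stable logical algebra condition is assumed. In particular the inclusion-induced maps $\mathcal{A}_{t+w}/\mathcal{N}_{t+w} \to \mathcal{A}_{3t}/\mathcal{N}_{3t}$ and $\mathcal{A}_{t}/\mathcal{N}_{t} \to \mathcal{A}_{t+w}/\mathcal{N}_{t+w}$ are both isomorphisms.

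Next I would assemble the two maps. By Lemma~\ref{lem:ANtoIM} the inclusion $\mathcal{A}_t \subseteq \mathcal{I}_t$ induces a well-defined map $i : \mathcal{A}_t/\mathcal{N}_t \to \mathcal{I}_t/\mathcal{M}_t$. By Lemma~\ref{lem:IMtoAN}, whose injectivity hypothesis was just verified, the symmetrization $\phi$ of Eq.~\eqref{eq:sym-composition} induces a well-defined \emph{injective} map $\bar\phi : \mathcal{I}_t/\mathcal{M}_t \to \mathcal{A}_{t+w}/\mathcal{N}_{t+w}$. The key observation is that $\phi_j$, hence $\phi$, fixes every operator that commutes with all terms of $H$; so for $O \in \mathcal{A}_t$ one has $\phi(O) = O$, and therefore $\bar\phi\circ i$ sends the $\mathcal{N}_t$-class of $O$ to its $\mathcal{N}_{t+w}$-class. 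In other words $\bar\phi\circ i$ is precisely the inclusion-induced map $\mathcal{A}_t/\mathcal{N}_t \to \mathcal{A}_{t+w}/\mathcal{N}_{t+w}$, which by the previous paragraph is a bijection.

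The diagram chase then finishes the argument: since $\bar\phi\circ i$ is a bijection, $i$ is injective and $\bar\phi$ is surjective; combined with the injectivity of $\bar\phi$ from Lemma~\ref{lem:IMtoAN}, $\bar\phi$ is a bijection, whence $i = \bar\phi^{-1}\circ(\bar\phi\circ i)$ is a bijection as well. This is the asserted isomorphism $\mathcal{A}_t/\mathcal{N}_t \cong \mathcal{I}_t/\mathcal{M}_t$ realized by the inclusion $\mathcal{A}_t \hookrightarrow \mathcal{I}_t$. The genuinely substantive work — the symmetrization estimates of Lemma~\ref{lem:symmetrization} and the local-topological-order manipulations behind Lemmas~\ref{lem:commuting-invisible} and~\ref{lem:ANtoIM} — is already available, so the only thing to watch here is the constant bookkeeping: one must keep $t$ small enough (the factor $1/60$ rather than $1/20$) that the thickened annuli of thickness $t+w$ and $3t$ occurring inside the two lemmas still fall within the range on which the stable logical algebra condition has been posited.
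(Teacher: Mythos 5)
Your proposal is correct and follows essentially the same route as the paper: the same commutative diagram built from Lemmas~\ref{lem:ANtoIM} and~\ref{lem:IMtoAN}, the same observation that $\phi$ fixes elements of $\mathcal A_t$ so that $\bar\phi\circ\bar\iota$ is the inclusion-induced isomorphism, and the same injective-plus-surjective chase to conclude both maps are bijections. Your explicit bookkeeping of the constants ($t+w$ and $3t$ lying in the window $(20w,\rann/20)$) only makes precise what the paper dispatches in one sentence.
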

\begin{proof}
We have a commutative diagram inferred from Lemma~\ref{lem:ANtoIM} and \ref{lem:IMtoAN}.
Our choice of the range $20w < t < \rann / 60$ is to ensure we can apply these lemmas.
\[
\xymatrix{
\mathcal A_t \ar[r]^{\iota} \ar[d] & \mathcal I_t \ar[r]^\phi \ar[d] & \mathcal A_{t+w}\ar[d] \\
\mathcal A_t/\mathcal N_t \ar[r]^{\bar \iota} & \mathcal I_t /\mathcal M_t \ar[r]^{\bar \phi} & \mathcal A_{t+w}/ \mathcal N_{t+w}
}
\]
The vertical arrows are the canonical maps from elements to equivalence classes that they represent.
The assumption implies $\bar \phi$ is well-defined and injective.
The upper line $\phi \circ \iota$ is just the inclusion map 
since the symmetrization acts as identity on the elements that already commute with $h_j$.
The assumption says $\bar \phi \circ \bar \iota$ is bijective.
It follows that $\bar \phi$ is surjective.
Therefore, $\bar \phi$ is bijective, and in turn, this means that $\bar \iota$ is bijective.
\end{proof}

\begin{cor}
Suppose Hamiltonians $H_1$ and $H_2$ of interaction range $w$
share a ground state, and satisfy our
local topological order and stable logical algebra conditions.
Then, the logical algebras $\mathcal A_t^{H_1} / \mathcal N _t^{H_1}$ and  $\mathcal A_t^{H_2} / \mathcal N _t^{H_2}$
are isomorphic as $C^*$-algebras for some $t$ (and hence for all $t$)
where $20w < t < \rann/60$,
by a map that preserves the twist pairing.
\label{cor:unique-logical-algebra}
\end{cor}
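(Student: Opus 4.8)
The plan is to deduce the isomorphism by routing both logical algebras through the purely state-dependent quotient $\mathcal{I}_t/\mathcal{M}_t$. By Corollary~\ref{cor:bijection}, applied separately to $H_1$ and to $H_2$ (each satisfies the local topological order and stable logical algebra conditions and has $\ket\psi$ as a ground state), the inclusion $\mathcal{A}_t^{H_i}\hookrightarrow\mathcal{I}_t$ induces a bijection $\bar\iota_i : \mathcal{A}_t^{H_i}/\mathcal{N}_t^{H_i}\to\mathcal{I}_t/\mathcal{M}_t$ for any $t$ in the admissible window. Since $\mathcal{I}_t$ and $\mathcal{M}_t$ are manifestly built from $\ket\psi$ alone, the composite $\Phi := \bar\iota_2^{-1}\circ\bar\iota_1$ is a canonical bijection $\mathcal{A}_t^{H_1}/\mathcal{N}_t^{H_1}\to\mathcal{A}_t^{H_2}/\mathcal{N}_t^{H_2}$, and the $t$-independence asserted in the statement is then immediate from the stable logical algebra condition, which already makes all the $\mathcal{A}_t/\mathcal{N}_t$ in the window mutually isomorphic via the inclusion maps. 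It remains to upgrade $\Phi$ to a twist-pairing-preserving $*$-isomorphism.

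To do so I would realize $\Phi$ concretely via symmetrization. Let $\phi$ be the symmetrization superoperator of Eq.~\eqref{eq:sym-composition} built from the terms of $H_2$. For $O\in\mathcal{A}_t^{H_1}\subseteq\mathcal{I}_t$, Lemma~\ref{lem:symmetrization} (with $S=\mathrm{id}$) gives $\phi(O)\in\mathcal{A}_{t+w}^{H_2}$ and $O-\phi(O)\in\mathcal{M}_t$, so $O$ and $\phi(O)$ represent the same class of $\mathcal{I}_t/\mathcal{M}_t$; combined with the containment $\mathcal{N}_{t+w}^{H_2}\subseteq\mathcal{M}_t$ (from the proof of Lemma~\ref{lem:ANtoIM}) and the stable logical algebra isomorphism $\mathcal{A}_{t+w}^{H_2}/\mathcal{N}_{t+w}^{H_2}\cong\mathcal{A}_t^{H_2}/\mathcal{N}_t^{H_2}$, this shows $\Phi([O])=[\phi(O)]$. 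Because $\phi$ is linear, unital, and $\dagger$-preserving (each factor $\phi_j$ fixes the identity, is complex-linear, and commutes with $\dagger$ since $2h_j-1$ is a self-adjoint unitary), $\Phi$ is at once linear, unital, and $*$-preserving.

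The main obstacle is multiplicativity: one must show $\phi(OO')\equiv\phi(O)\phi(O')$ modulo $\mathcal{N}^{H_2}$, i.e. that $\phi$ respects products up to null operators even though it is not a homomorphism. The key observation is that $O,O',OO'$, being $H_1$-commuting, are locally invisible at \emph{every} scale (Lemma~\ref{lem:commuting-invisible}), hence lie in $\mathcal{I}_{t+w}$ as well as in $\mathcal{I}_t$, and likewise $\phi(O),\phi(O')\in\mathcal{A}_{t+w}^{H_2}\subseteq\mathcal{I}_{t+w}$; moreover $\phi$ fixes every $H_2$-commuting operator, so it is idempotent on them. Consequently the defect factors as $\phi(OO')-\phi(O)\phi(O')=\phi(\delta')$ with $\delta':=\big(O-\phi(O)\big)O'+\phi(O)\big(O'-\phi(O')\big)$, which two applications of Lemma~\ref{lem:symmetrization} place in the locally-null space $\mathcal{M}$. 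Writing $\delta'=\sum_i\delta_i'$ with $\Pi_{D_i}\delta_i'=\delta_i'\Pi_{D_i}=0$ and pushing the product $P_{D_i}$ of the nearby $H_2$-terms through $\phi$ (it commutes with $\phi$), one gets $\phi(\delta_i')P_{D_i}=\phi(\delta_i'\Pi_{D_i}P_{D_i})=0$ using $\Pi_{D_i}P_{D_i}=P_{D_i}$ (Lemma~\ref{lem:basic-projectors}); since each $\phi(\delta_i')$ commutes with $H_2$, Lemma~\ref{lem:alg-decomposition} together with the stable logical algebra injectivity (exactly as in Lemma~\ref{lem:IMtoAN}) forces $\phi(\delta')\in\mathcal{N}^{H_2}$, and transporting back along the thickness isomorphism finishes multiplicativity. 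The genuinely laborious part is the bookkeeping: $\phi$ enlarges supports by $w$ and shifts the locality windows, so one must check that every move $t\rightsquigarrow t+w\rightsquigarrow 3t$ stays inside a common admissible range — which is precisely what the restriction $20w<t<\rann/60$ buys. Once multiplicativity is in hand, $\Phi$ is a unital, $*$-preserving, bijective homomorphism between $C^*$-algebras, hence a $*$-isomorphism (automatically isometric); running the construction for the left and the right annulus gives $\Phi_L$ and $\Phi_R$.

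Finally, twist-pairing preservation is comparatively painless. The twist pairing $(P,Q)\mapsto\bra\psi P\infty Q\ket\psi$ is bilinear and vanishes whenever $P$ is $H$-null (Eq.~\eqref{eq:Hnull-null}) or locally null (Eq.~\eqref{eq:locally-null-zero-twist-pairing}), and symmetrically in $Q$, so it descends to a well-defined pairing between the left and right logical algebras, and the inclusion-induced identifications $\bar\iota_i$ do not alter representatives. For $O$ in the left algebra and $O'$ in the right, $\Phi_L([O])$ has the representative $\phi(O)$ with $\phi(O)-O\in\mathcal{M}$, so $\bra\psi\phi(O)\infty O'\ket\psi=\bra\psi O\infty O'\ket\psi$, and likewise on the right; hence $\Phi_L$ and $\Phi_R$ jointly preserve the twist pairing. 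In particular $\Phi_L$, being a $*$-isomorphism, carries the central minimal idempotents (the fundamental projectors) of one logical algebra to those of the other, so the $\tilde S$-matrices assembled from them agree up to a relabeling of rows and columns, which is the content of the corollary.
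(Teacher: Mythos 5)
Your proposal is correct and follows essentially the same route as the paper: both factor the isomorphism through the state-defined quotient $\mathcal I_t/\mathcal M_t$, realize it concretely by the $H_2$-symmetrization $\phi$, prove multiplicativity by showing the defect is locally null (your telescoping $\delta'=(O-\phi(O))O'+\phi(O)(O'-\phi(O'))$ is exactly the paper's chain $xy\sim x\phi(y)\sim\phi(x)\phi(y)$ modulo $\mathcal M_{t+w}$) and then pushing the remainder into $\mathcal N^{H_2}$ with a second application of symmetrization, and derive twist-pairing preservation from the vanishing of the pairing on locally null operators. The paper merely packages the second symmetrization as an explicit extra map $\bar\phi_2'$ in the composite and cites Lemma~\ref{lem:IMtoAN} for the remainder term where you unpack its proof; these are presentational differences only.
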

\begin{proof}
Every map in this subsection (the symmetrization Eq.~\eqref{eq:sym-composition} or the inclusion)
preserves hermitian conjugation.
Thus, it suffices to show that the composition $\Phi$
\begin{equation}
\Phi : 
\xymatrix{
 \mathcal A_t^{(1)} / \mathcal N _t^{(1)} 
\ar[r]^{ \bar \iota_1 } & 
 \mathcal I_{t} / \mathcal M_{t} 
\ar[r]^{ \bar \phi_2 } &
 \mathcal A_{t+w}^{(2)} / \mathcal N_{t+w}^{(2)}
\ar[r]^{ \bar \iota_2 } &
 \mathcal I_{t+w} / \mathcal M_{t+w} 
\ar[r]^{ \bar \phi_2' } &
 \mathcal A_{t+2w}^{(2)} / \mathcal N_{t+2w}^{(2)}
}
\end{equation}
preserves addition and multiplication, i.e., is an algebra-homomorphism.
Here,
the maps $\bar \iota_1, \bar \iota_2$ are by Lemma~\ref{lem:ANtoIM},
and $\bar \phi_2, \bar \phi_2'$ are by Lemma~\ref{lem:IMtoAN};
$w$ is the interaction range of $H_2$;
and the superscripts $(1),(2)$ refer to the Hamiltonians $H_1, H_2$.
The stable logical algebra condition and Corollary~\ref{cor:bijection}
imply that $\Phi$ is bijective.

The addition is readily preserved in $\Phi$ since every map is linear.
To show that the multiplication is preserved, suppose $x,y \in \mathcal A_t^{(1)}$.
Then, surely, $xy \in \mathcal A_t^{H_1}$.
We have to show that $\Phi([xy]) = \Phi([x])\Phi([y])$.
(The brackets are to distinguish representatives from their equivalence classes.)
Since every map is well-defined,
we may instead trace the images of $x,y,xy$
under inclusions and symmetrizations.
Under $\iota_2 \circ \phi_2 \circ \iota_1 : \mathcal A_t^{(1)} \to \mathcal I_{t+w}$,
the operators $x,y,xy$ are mapped to $\phi_2(x), \phi_2(y), \phi_2(xy) \in \mathcal I_{t+w}$, respectively.
Now, consider the equivalence relation supplied by $\mathcal M_{t+w}$.
Lemma~\ref{lem:symmetrization} applied to $\mathcal I_{t+w}$ implies
\begin{align*}
 \phi_2(xy) &\sim xy &&\because xy \in \mathcal A_t^{(1)} \subseteq \mathcal I_{t+w} , \\
 xy &\sim x \phi_2(y) &&\because x,y \in \mathcal A_t^{(1)} \subseteq \mathcal I_{t+w} ,\\
 x \phi_2(y) &\sim \phi_2(x)\phi_2(y) && 
 \because 
 \begin{cases} 
 x  \in \mathcal I_{t+w}, \\ 
 \phi_2(y) \in \mathcal A_{t+w}^{(2)} \subseteq \mathcal I_{t+w} 
 \end{cases} 
\end{align*}
where $\because$ stands for ``because.''
The transitivity of the relation $\sim$
implies
\[
\phi_2(xy) = \phi_2(x) \phi_2(y) + m
\]
for some element $m \in \mathcal M_{t+w}$.
We have then
\[
\phi_2' (\phi_2(xy)) = \phi_2'( \phi_2(x) \phi_2(y) ) + \phi_2'(m).
\]
Lemma~\ref{lem:IMtoAN} implies that $\phi_2'(m) \in \mathcal N_{t+2w}$.
Since an operator that already commutes with every term in the Hamiltonian
remains the same under further symmetrization,
we have $\phi_2' (\phi_2(xy)) = \phi_2(xy)$
and $\phi_2'( \phi_2(x) \phi_2(y) ) = \phi_2(x) \phi_2(y)$.
This proves $\Phi([xy]) = \Phi([x])\Phi([y])$,
and the $C^*$-isomorphism is established.

It remains to show that the isomorphism preserves the twist pairing.
Let $[O_i] \in \mathcal A^{H_i}_t / \mathcal N^{H_i}_t$ for $i=1,2$ be
the corresponding elements under the isomorphism.
We have shown that $[O_1]$ and $[O_2]$ are mapped by the inclusion 
to the same element in $\mathcal I_t / \mathcal M_t$.
This implies that $O_1 - O_2 \in \mathcal M_t$.
By the result of Eq.~\eqref{eq:locally-null-zero-twist-pairing},
this means that $O_1$ and $O_2$ give the same twist pairing.
\end{proof}
Note that we made use of two consecutive symmetrizations $\phi_2$ and $\phi_2'$,
which might seem redundant,
in order to treat the remainder term $m$.

\begin{proof}[Proof of Theorem~\ref{thm:invariant}]
We have proved the first statement in Corollary~\ref{cor:unique-logical-algebra}.
To prove the second statement,
we claim that the new Hamiltonian
\[
H' = W H W^\dagger = \sum_j W h_j W^\dagger
\]
for $W \ket \psi$ is commuting and locally topologically ordered,
and the corresponding algebra 
\[
\mathcal C'_{t'} = \mathcal A'_{t'} / \mathcal N'_{t'}
\]
is isomorphic to $\mathcal C_t$ when $t' = t+ R$ with an isomorphism being the conjugation by $W$.
Here, $\mathcal A'_{t'}$ and $\mathcal N'_{t'}$ are with respect to the new Hamiltonian $W H W^\dagger$.

Then, since $\mathcal C_t$ is stable with respect to $t$, so will $\mathcal C'_{t'}$.
This will mean that $W \ket \psi$ admits a locally topologically ordered Hamiltonian,
and the logical algebra is stable, so $\tilde S(W \ket \psi)$ will be defined.
Moreover, since the isomorphism $\mathcal C_t \to \mathcal C'_{t'}$ is induced by $W$,
the fundamental projectors $\pi_a$ of $\mathcal C_{t}$ will represent fundamental projectors
of $\mathcal C'_{t'}$ as $W \pi_a W^\dagger$. Therefore,
\[
 \bra \psi W^\dagger (W \pi_a W^\dagger) \infty (W \pi_b W^\dagger) W \ket \psi = \bra \psi \pi_a^L \infty \pi_b^R \ket \psi 
\]
by Lemma~\ref{lem:inv-twist-prod}, and we will complete the proof of the theorem.

The terms of the new Hamiltonian $WHW^\dagger$ is clearly commuting with one another,
since the conjugation is an algebra-automorphism.
To show that it is locally topologically ordered,
we use the observation made in the proof of Lemma~\ref{lem:evolved-invisibility}.
Given a disk $D$, let $D'$ be $(w+R)$-ball of $D$.
Let $P_{D'}$ be the product of all terms $h_j$ of $H$ that are supported on $D'$,
and define $P'_{D'} = W P_{D'} W^\dagger$.
$P'_{D'}$ is supported on $(w+2R)$-ball of $D$.
Let $\ket \phi$ be any state such that $P'_{D'} \ket \phi = \ket \phi$.
We claim that the reduced density matrix on $D$ of $\ket \phi$ is the same as that of $W\ket \psi$ on $D$.
Since $P_D W^\dagger \ket \phi = W^\dagger \ket \phi$,
using the local topological order condition of $H$, 
we know $W^\dagger \ket \phi$ has the same reduced density matrix
on the $R$-ball of $D$ as $\ket \psi$.
Using the fact that $W$ is a quantum circuit of range $R$,
the reduced density matrix on $D$ of $\ket \phi = W(W^\dagger \ket \phi)$
is the same as that of $W \ket \psi$.
This proves the claim.

To show that $\mathcal C'_{t'}$ is isomorphic to $\mathcal C_t$, consider the conjugation by $W$
\[
 \omega: \mathcal A_t \xrightarrow{W \cdot W^\dagger} \mathcal A'_{t+R} .
\]
It is straightforward to check that $W \mathcal N_t W^\dagger \subseteq \mathcal N'_{t+R}$,
so
\[
 \bar \omega : \mathcal C_t \xrightarrow{W \cdot W^\dagger} \mathcal C'_{t+R}
\]
is induced and well-defined. Similarly, consider a map
\[
 \bar \omega^\dagger : \mathcal C'_{t+R} \xrightarrow{W^\dagger \cdot W} \mathcal C_{t+2R}.
\]
Certainly, $\omega^\dagger \circ \omega : \mathcal A_{t} \to \mathcal A_{t+2R}$ is
the inclusion, so $\bar \omega^\dagger \circ \bar \omega$ is an isomorphism by our stable logical algebra assumption.
In particular, $\bar \omega^\dagger$ is surjective.
We claim $\bar \omega^\dagger$ is injective as well.
Suppose $\omega^\dagger( O' ) = W^\dagger O' W \in \mathcal N_{t+2R}$.
By the definition of $\mathcal N_{t+2R}$, 
\[
 W^\dagger O' W = O_1 + \cdots + O_n
\]
and there exist $h_{j(i)}$ supported on the annulus of thickness $t+2R+w$
such that $O_i h_{j(i)} = h_{j(i)} O_i = 0$. The decomposition can be written equivalently as
\[
 O' = W O_1 W^\dagger + \cdots + W O_n W^\dagger .
\]
Each $W O_i W^\dagger$ is annihilated by $W h_{j(i)} W^\dagger$ which is supported
on the annulus of thickness $t+3R+w = t+R +w'$, where $w' = w+2R$ is the interaction range of $H'$.
This means that $O' \in \mathcal N'_{t+R}$,
and the injectivity of $\bar \omega^\dagger$ is proved.
Therefore, $\bar \omega^\dagger$ is bijective.
We conclude that $\bar \omega$ is also bijective, and $\mathcal C_t$ and $\mathcal C'_{t+R}$ are isomorphic.
\end{proof}

\section{Discussion}

We have defined an invariant matrix $\tilde S$ under local unitary transformations
for ground states of a particular class of commuting Hamiltonians.
$\tilde S$ is naturally motivated by braiding of quasi-particles
and its modular tensor category description,
but we have avoided dealing with the axioms of the categorical description,
by analyzing operator algebras that stabilize the state from the beginning.
Our definition of $\tilde S$ matrix is in terms of the Hamiltonian,
complemented by a proof that $\tilde S$ is independent of the Hamiltonian
as long as it belongs to a class specified by the local topological order
and stable logical algebra conditions.

It is certainly desirable to have an invariant that is 
manifestly defined in terms of the bulk of the state.
The topological entanglement entropy could be a good exemplary quantity,
but we do not yet have a rigorous answer when this quantity is invariant.
Our notion of locally invisible operators would give much more information
about the state, but they generally do not form an algebra,
which makes it hard to extract useful information out.
Our local topological order and stable logical algebra assumptions
constitute a sufficient condition for the locally invisible operators
to form an algebra.
Perhaps, in two dimensions the stable logical algebra assumption
may follow from the local topological order assumption,
as the latter seems to imply number of particle types in 2D is bounded.
This is an interesting open problem.

Another interesting problem is to determine 
when the unitarity of the $S$ matrix (the modularity) can be derived.
Note that this cannot be proved unless some homogeneity is assumed.
For example, if an annulus happens to include a gapped boundary,
then the logical algebra may have different $\mathbb C$-dimension
from that on an annulus in the bulk.
Then, the $\tilde S$ matrix between the two logical algebras is not necessarily square.

We have mainly discussed two-dimensional lattices, 
but a higher dimensional generalization is straightforward;
one can replace the annulus with spheres with a thick enough wall,
or more generally consider any embedding of lower dimensional manifolds.
In three dimensions for example, one can consider
a sphere and a circle that are intersecting at a pair of distant points.
This corresponds to braiding of point-like particles around string-like excitations.

\begin{acknowledgments}
The author is grateful to Sergey Bravyi and Aram Harrow for useful discussions,
and acknowledge the hospitality of
Simons Institute for Theory of Computing, Berkeley, California,
where a part of this work was done.
The author also thanks Zhenghan Wang for informing Ref.~\cite{Naidu2007}.
This work is supported by MIT Pappalardo Fellowship in Physics.
\end{acknowledgments}

\end{document}